\newtheorem{thm}{Theorem}\crefname{thm}{Theorem}{Theorems}
\newtheorem*{thm*}{Theorem}
\newtheorem*{cor*}{Corollary}
\newtheorem{lem}[thm]{Lemma}\crefname{lem}{Lemma}{Lemmas}
\crefname{prp}{Proposition}{Propositions}
\newtheorem{cor}[thm]{Corollary}\crefname{cor}{Corollary}{Corollaries}
\crefname{dfn}{Definition}{Definitions}
\crefname{section}{Section}{Sections}
\crefname{appendix}{Appendix}{Appendices}
\newcommand{\R}{\mathbb{R}}
\newcommand{\C}{\mathbb{C}}
\begin{document}

% Use the \preprint command to place your local institutional report
% number in the upper righthand corner of the title page in preprint mode.
% Multiple \preprint commands are allowed.
% Use the 'preprintnumbers' class option to override journal defaults
% to display numbers if necessary
%\preprint{}

%Title of paper
\title{Free Mode Removal and Mode Decoupling for \texorpdfstring{\\}{ } Simulating General Superconducting Quantum Circuits}
\author{Dawei Ding}
\email{d.ding@alibaba-inc.com}
\affiliation{Alibaba Quantum Laboratory, Alibaba Group USA, Bellevue, Washington 98004, USA}

\author{Hsiang-Sheng Ku}
\email{hsiang-sheng.gxs@alibaba-inc.com}
\affiliation{Alibaba Quantum Laboratory, Alibaba Group, Hangzhou, Zhejiang 311121, P.R.China}

\author{Yaoyun Shi}
\affiliation{Alibaba Quantum Laboratory, Alibaba Group USA, Bellevue, Washington 98004, USA}

\author{Hui-Hai Zhao}
\email{huihai.zhh@alibaba-inc.com}
\affiliation{Alibaba Quantum Laboratory, Alibaba Group, Beijing 100102, P.R.\ China}
\date{\today}

\begin{abstract}
  Superconducting quantum circuits is one of the leading candidates for a universal quantum computer. Designing novel qubit and multiqubit superconducting circuits requires the ability to simulate and analyze the properties of a general circuit. In particular, going outside the transmon approach, we cannot make assumptions on anharmonicity, thus precluding blackbox quantization approaches and necessitating the formal circuit quantization approach. We consider and solve two issues involved in simulating general superconducting circuits. One of the issues is the handling of free modes in the circuit, that is, circuit modes with no potential term in the Hamiltonian. Another issue is circuit size, namely the challenge of simulating strongly coupled multimode circuits. The main mathematical tool we use to address these issues is the linear canonical transformation in the setting of quantum mechanics. We address the first issue by giving a provably correct algorithm for removing free modes by performing a linear canonical transformation to completely decouple the free modes from other circuit modes. We address the second by giving a series of different linear canonical transformations to reduce intermode couplings, thereby reducing the problem to the weakly coupled case and greatly mitigating the overhead for classical simulation. We benchmark our decoupling methods by applying them to the circuit of two inductively coupled fluxonium qubits, obtaining several orders of magnitude reduction in the size of the Hilbert space that needs to be simulated.
\end{abstract}

\maketitle

\section{Introduction}
Superconducting quantum circuits is a promising physical medium for realizing universal quantum computing. Progress in coherence times, fabrication, and qubit designs have propelled the superconducting path to achieving the first useful and scalable quantum computer. For reviews about current state of the art results and methods, see for instance~\cite{kjaergaard2020superconducting} and~\cite{krantz2019quantum}.

The first step in designing superconducting quantum circuits is to classically simulate circuits of individual qubits or a number of coupled qubits. This is done by finding a Hamiltonian that models the properties of the circuit and its dynamics. However, a general superconducting circuit can be complex, containing multiple qubit modes with various anharmonicities depending on the type of qubit we are considering. These qubit modes can be coupled to each other and to linear harmonic modes for readout and control. Thus, classically simulating general circuits can be a nontrivial problem and numerically costly without proper techniques.

A method to address this for circuits that are linear or with low anharmonicity is black box quantization~\cite{manucharyan2007microwave,nigg2012black}. This technique uses the harmonic eigenmodes obtained from the electromagnetic simulation of the linear response function of the circuit as the basis for the Hamiltonian. This is sufficient for circuits consisting of weakly anharmonic transmons~\cite{koch2007charge}, but not for more general superconducting circuits.

Hence, to enlarge our search for novel circuit designs, we need to analyze the Hamiltonian of a general superconducting quantum circuit. Such Hamiltonians are given in terms of canonically conjugate variables of fluxes and charges, respectively position and momentum, of various circuit components, or linear combinations thereof~\cite{yurke1984quantum,devoret1995quantum,burkard2004multilevel,burkard2005circuit}. The parameters in the Hamiltonians are functions of linear inductances and capacitances of the circuit, which can be obtained by a series of electromagnetic simulations, in addition to characteristic energies of Josephson junctions. However, there are two general issues encountered in analyzing such Hamiltonians. 

One of the issues is the handling of free modes in the circuit, that is, circuit modes with no potential term in the Hamiltonian. These modes appear for instance in circuits which are floating with respect to ground. Other examples are when there are multiple components of the circuit which are only capacitively coupled. For a general circuit, such free modes can have charge coupling with other modes. However, we mathematically prove that for a general circuit, these modes can be transformed to be fully independent of the other modes. The transformed free modes do not participate in the dynamics of the circuit, which usually involve qubit and resonator modes. We then show how to correctly extract the Hamiltonian for these nonfree modes that actually participate in quantum information processing and measurement.

Another issue is simply one of size. The simulation of strongly coupled multimode circuits is quite challenging --- indeed, this complexity is the main reason quantum computers are interesting in the first place. Na\"ive approaches to numerically analyzing strongly coupled multimode circuits can easily become computationally prohibitive, and it is necessary to find improved methods with reduced computational costs that can be applied to general circuits. Some progress is made in~\cite{rodriguez_2014}, where the author provides a software package to quantize general superconducting circuits and analyze the Hamiltonian obtained. There, the author makes a transformation to perform the Born-Oppenheimer approximation to separate the fast and slow degrees of freedom. There are also studies on multimode circuits consisting of arrays of Josephson junctions~\cite{junction_chain1997,junction_ladder2003,current_mirror_DMRG2019,fluxonium_dmrg2019} that have employed the density matrix renormalization group (DMRG) method~\cite{dmrg1992,dmrg1993} to efficiently and accurately compute the low-lying eigenstates. These simulations are performed on many-mode Hamiltonians constructed by transformations of specific types of circuits. Hence, both of these methods are not sufficiently general.

We address both of these issues using the mathematical tool of the linear canonical transformation in the setting of quantum mechanics. This tool has been extensively studied in the literature~\cite{itzykson1967remarks,bargmann1970group,moshinsky1971linear} and is extremely useful for our purposes since the Hamiltonian of general superconducting circuits is mostly quadratic in the charge and flux operators. It is therefore natural to consider linear transformations to manipulate the form of the Hamiltonian. In particular, we will use linear canonical transformations to fully decouple the free modes from the other modes, thereby allowing us to directly remove them from the Hamiltonian. We will also use linear canonical transformations to reduce couplings between different circuit modes, allowing us to reduce the problem of simulating strongly coupled multimode circuits to the weakly coupled case. This significantly reduces the complexity of numerically analyzing the overall circuit Hamiltonian. We again emphasize that our techniques do not make any assumptions about the superconducting circuit save minimal requirements for circuit quantization~\footnote{Even these requirements can often be satisfied by adding fictitious small capacitors or large inductors. } and are otherwise completely general. This need to simulate general circuits comes at a time where, despite the success of the now standard transmon qubit, the search for novel qubit and multiqubit designs is still very much ongoing. Recent successful experimental demonstrations of the fluxonium~\cite{nguyen2019high} and $0$-$\pi$~\cite{gyenis2019experimental} qubits are some examples. 

We arrange the paper as follows. For completeness, we first review the quantization methods for general superconducting circuits as well as linear canonical transformations in quantum mechanics in~\cref{sec:preliminaries}. In~\cref{sec:free_mode}, we introduce the methods to identify and remove free modes, applying them to the circuit of a Cooper-pair box with a drive and capacitively coupled to a resonator. In~\cref{sec:decouple}, we introduce three mode decoupling techniques and compare their performance by applying them to the circuit of two inductively coupled fluxonium qubits. Finally, we summarize with a discussion of future work in~\cref{sec:conclusion}. In general, we structure our paper to obviate the need to dive into the proofs and mathematical details, and those interested only in implementation can readily obtain the relevant information as well as high-level explanations of the methods.

\section{Preliminaries}\label{sec:preliminaries}
We cite and derive some results that we will use in the rest of the paper. 

\subsection{Quantization of a General Superconducting Circuit}
The Hamiltonian of a general superconducting circuit can be obtained using techniques from network theory~\cite{yurke1984quantum,devoret1995quantum,burkard2004multilevel,burkard2005circuit}. In the most recent approaches~\cite{burkard2004multilevel,burkard2005circuit}, a superconducting circuit is treated as a multigraph, where the nodes are the vertices and the circuit elements are the edges. When there is no dissipation, the circuit elements can be capacitors, inductors, Josephson junctions, and voltage or current sources. They then find a spanning tree consisting of a subset of classes of elements (e.g.\ Josephson junctions and inductors) and show that the fluxes across the edges of the spanning tree form a set of generalized coordinates for the system. Finally they derive the equations of motion for the fluxes in the spanning tree and thereby obtain the Lagrangian. A Legendre transformation is performed to obtain the classical Hamiltonian, in the process identifying linear combinations of the charges across the edges as the variables conjugate to the fluxes. The quantum Hamiltonian is obtained via canonical quantization.

In our paper, we use the approach in~\cite{burkard2005circuit} since we often have circuits with a large number of capacitors, and the other approach~\cite{burkard2004multilevel} does not allow for circuits with capacitor loops. With this approach, assuming the spanning tree has $n$ edges containing an inductor or Josephson junction, the Hamiltonian takes the form
\begin{align}
  \label{eq:circ_ham}
  H = & \frac{1}{2} (\vec Q - C_V \vec V)^T  \mathcal{C}^{-1} (\vec Q - C_V \vec V) \nonumber \\
  & - \sum_{i=1}^{n_J} E_{J,i} \cos \left(\frac{2 \pi}{\Phi_0} \Phi_i \right) + \frac{1}{2} \vec \Phi^T M_0 \vec \Phi + \vec \Phi^T N \vec \Phi_x,
\end{align}
where $\vec \Phi = (\Phi_1, \Phi_2, \cdots, \Phi_n)^T$ is the vector of flux operators for the different edges of the spanning tree, $\vec Q = (Q_1, Q_2, \cdots, Q_n)^T$ is the vector~\footnote{An important clarification is needed here. The correct interpretation of~\cref{eq:circ_ham} and similar expressions is to perform all matrix multiplications assuming all the quantum operators are scalars. Polynomials of $\Phi_i, Q_i$ are obtained, at which point the quantum operator nature can be reintroduced. Each term in the polynomials only involves commuting operators, so there is no ambiguity in the operator ordering. } of canonically conjugate operators corresponding to linear combinations of the charges across the edges, $\vec V$ is the vector of voltage biases in the circuit each multiplied with the identity operator, $\mathcal{C}^{-1}, M_0, N, C_V$ are the matrices of charge, flux, external flux, and voltage couplings, respectively, $\vec \Phi_x$ is the external magnetic flux, $n_J$ is the number of Josephson junctions with $E_{J,i}$ being the characteristic energy scale of each junction, and $\Phi_0 \equiv \frac{h}{2e}$ is the superconducting flux quantum. The flux operators correspond to edges in the spanning tree containing either a Josephson junction or an inductor. In our work we will often distinguish between (Josephson) junction modes and inductor modes.

\subsection{Linear Canonical Transformations in Quantum Mechanics}
Looking at the Hamiltonian in~\cref{eq:circ_ham}, the flux and charge operators mainly appear as quadratic forms, that is, expressions consisting entirely of terms quadratic in $Q$ or $\Phi$. In particular, all the coupling terms are quadratic forms. With this observation, it is natural to consider linear transformations~\cite{itzykson1967remarks,bargmann1970group,moshinsky1971linear} of the flux and charge operators to manipulate the form of the Hamiltonian, and this is the main mathematical tool we use in this work. We consider a special case where we do not mix flux and charge operators. We will see that this is sufficient for our purposes when \emph{a priori} a more general transformation may be needed. More specifically, for us a linear canonical transformation is described by a real invertible matrix $W \in \mathrm{GL}(n, \R)$, which transforms the canonical flux and charge operators as 
\begin{align}
  \label{eq:lin_can}
  \vec \Phi \mapsto \vec \Phi' \equiv W \vec \Phi, \, \vec Q \mapsto \vec Q' \equiv (W^T)^{-1} \vec Q.
\end{align}
This linear transformation preserves the canonical commutation relations:
\begin{align}
  [ \Phi_i, \Phi_j ]= 0, \, [ Q_i, Q_j ] =0, \, [ \Phi_i, Q_j ] = i \hbar \delta_{ij}.
\end{align}
That the first two types are preserved is easily checked. For the third, the following computation can be done:
\begin{align}
  \left[ \Phi_i', Q_j' \right] & = \left[ \sum_{k=1}^{n} W_{ik} \Phi_k, \sum_{l=1}^{n} W^{-1}_{lj} Q_l \right] \nonumber\\
  & = \sum_{k,l = 1}^{n} W_{ik} W^{-1}_{lj} i \hbar \delta_{kl} \nonumber\\
  & = i\hbar \delta_{ij}.
\end{align}

The main fact about linear canonical transformations that we will use is that the Hamiltonian in terms of the transformed operators can be numerically analyzed in exactly the same way as before the transformation. More precisely, the matrices assigned to $\vec \Phi, \vec Q$ to discretize the Hamiltonian can be used for $\vec \Phi', \vec Q'$. Although the resulting matrix forms of the Hamiltonians will not be the same, observable properties such as the spectrum and matrix elements between different energy eigenstates will be the same. This is because the two matrices are unitarily equivalent. That is, letting $H_1, H_2$ be the two matrices, there exists a unitary $U$ such that
\begin{align}
  H_2 = U H_1 U^\dagger.
\end{align}

In more abstract terms, a linear canonical transformation is a \textit{symplectic transformation}, which is defined as a real matrix $S \in \R^{2n \times 2n}$ that preserves the symplectic form:
\begin{align}
  S^T \Omega S = \Omega,
\end{align}
where the symplectic form is defined as the matrix
\begin{align}
  \Omega \equiv
  \begin{pmatrix}
    0_n & I_n \\
    -I_n & 0_n
  \end{pmatrix},
\end{align}
where $0_n$ is the $n\times n$ dimensional zero matrix and $I_n$ is the $n\times n$ identity matrix. In our case, the symplectic matrix is block diagonal:
\begin{align}
  S =
  \begin{pmatrix}
    W & 0_n \\
    0_n & (W^T)^{-1}
  \end{pmatrix} \in \mathrm{Sp}(2n, \R).
\end{align}
This matrix acts on the $2n$-dimensional vector of operators
\begin{align}
  \begin{pmatrix}
    \vec \Phi\\
    \vec Q
  \end{pmatrix}.
\end{align}
The fact about linear canonical transformations that we need can be found in~\cite{moshinsky1971linear}, which we express informally by the following lemma: 
\begin{lem}
  Let $(x_1, x_2, \cdots, x_n, p_1, p_2, \cdots, p_n)^T$ be a vector of canonical quantum operators. Furthermore, let $S \in \mathrm{Sp}(2n, \R)$ be a symplectic matrix and define the operators
  \begin{align}
    & (x_1', \cdots, x_n', p_1', \cdots, p_n')^T \nonumber\\
    & \equiv S (x_1, x_2, \cdots, x_n, p_1, p_2, \cdots, p_n)^T.
  \end{align}
  Then, there exists a unitary $U$ such that for all $i \in 1, \cdots, n$,
  \begin{align}
    x_i' = U x_i U^\dagger
  \end{align}
  and
  \begin{align}
    p_i' = U p_i U^\dagger.
  \end{align}
  \label{lem:lin_can}
\end{lem}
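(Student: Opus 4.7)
The plan is to reduce the statement to a small generating set of $\mathrm{Sp}(2n,\R)$ for which the implementing unitaries can be written down explicitly, and then to compose them. A convenient generating set consists of: the ``point'' transformations $S_M = \begin{pmatrix} M & 0 \\ 0 & (M^T)^{-1} \end{pmatrix}$ for $M \in \mathrm{GL}(n,\R)$; the ``lower shears'' $S_A = \begin{pmatrix} I & 0 \\ A & I \end{pmatrix}$ and ``upper shears'' $S_B = \begin{pmatrix} I & B \\ 0 & I \end{pmatrix}$ for symmetric $n \times n$ matrices $A, B$; and the symplectic Fourier element $J = \begin{pmatrix} 0 & I \\ -I & 0 \end{pmatrix}$. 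A pre-Iwasawa (Bruhat-type) decomposition of $\mathrm{Sp}(2n,\R)$ then shows that every symplectic $S$ can be written as a finite product of matrices of these types.

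For each generator I would exhibit the implementing unitary on $L^2(\R^n)$ and verify the intertwining identity $U R U^\dagger = S R$, where $R \equiv (x_1, \cdots, x_n, p_1, \cdots, p_n)^T$. Concretely, $S_M$ is implemented by the change-of-variables unitary $(U_M \psi)(x) = \abs{\det M}^{1/2} \psi(Mx)$, for which a short computation gives $U_M x_i U_M^\dagger = \sum_j M_{ij} x_j$ and $U_M p_i U_M^\dagger = \sum_j (M^{-T})_{ij} p_j$. The shear $S_A$ is implemented by the multiplication operator $U_A = \exp(-i x^T A x / 2 \hbar)$, which commutes with each $x_i$ and, via $[x_j x_k, p_i] = i \hbar (\delta_{ik} x_j + \delta_{ij} x_k)$ together with the symmetry of $A$ (which truncates the Baker--Campbell--Hausdorff series after a single commutator), satisfies $U_A p_i U_A^\dagger = p_i + \sum_j A_{ij} x_j$. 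The shear $S_B$ is implemented analogously by $U_B = \exp(i p^T B p / 2 \hbar)$, and $J$ by the $n$-dimensional Fourier transform. Composing in the order dictated by the decomposition of $S$ gives the desired $U$.

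The main obstacle is the group-theoretic step of decomposing an arbitrary $S$ into these generators. The naive route is block Gaussian elimination on the $2 \times 2$ block form of $S$, which realizes $S$ as a product of the above generators with the symmetric shear matrices read off from the block entries, the symplectic constraint on $S$ guaranteeing that the $A$ and $B$ so produced are indeed symmetric. However, this requires an appropriate block of $S$ to be invertible; the degenerate case is handled by first multiplying $S$ by a ``partial'' symplectic Fourier element acting on a suitable subset of coordinates so as to swap flux and charge entries and recover invertibility, and a short case analysis shows this can always be arranged. Once the decomposition is in hand, the intertwining for the composite is immediate from that for each factor. For the block-diagonal case $S = \begin{pmatrix} W & 0 \\ 0 & (W^T)^{-1} \end{pmatrix}$ actually used in the body of the paper, the construction collapses to the single unitary $U = U_W$.
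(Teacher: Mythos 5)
Your proposal is correct in outline, but note that the paper itself does not prove this lemma: it is stated explicitly as an informal rendering of a fact taken from~\cite{moshinsky1971linear}, so there is no in-paper argument to compare against. What you supply is the standard constructive proof of the (projective) metaplectic representation: decompose $S$ into the generators $\mathrm{diag}(M,(M^T)^{-1})$, the symmetric shears, and $J$, and implement each by an explicit unitary on $L^2(\R^n)$ (dilation, quadratic phase in position, quadratic phase in momentum, Fourier transform). Your intertwining computations for each generator are right --- in particular the symmetry of $A$ does truncate the Baker--Campbell--Hausdorff series after one commutator, giving $U_A p_i U_A^\dagger = p_i + \sum_j A_{ij} x_j$ --- and the composite unitary then implements the full $S$. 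Two points deserve tightening. First, the step you dismiss as ``a short case analysis'' is the only genuinely nontrivial input: the clean statement is that every element of $\mathrm{Sp}(2n,\R)$ is a product of at most two free symplectic matrices (those with invertible off-diagonal block), each of which factors through your generators; this should be either cited (e.g.\ de Gosson's or Folland's treatment) or proved, since naive block elimination does fail for singular blocks, exactly as you note. Second, because $x_i, p_i$ are unbounded, the identity $x_i' = U x_i U^\dagger$ needs a domain qualification (equality on a common core, or of the associated one-parameter unitary groups); the paper sidesteps this by calling the lemma informal, and you may reasonably do the same. Your closing observation is worth emphasizing: the paper only ever applies the lemma with block-diagonal $S = \mathrm{diag}(W, (W^T)^{-1})$, for which the single dilation unitary $(U_W\psi)(x) = \abs{\det W}^{1/2}\psi(Wx)$ already constitutes a complete and elementary proof of everything the paper actually uses.
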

That is, we can effectively treat the transformed fluxes and charges in the same way as the originals. Hence, we can manipulate the form of the Hamiltonian via the corresponding transformations on the coupling matrices in~\cref{eq:circ_ham} when we replace $\vec \Phi, \vec Q$ with $\vec \Phi', \vec Q'$. The transformations of the flux and charge coupling matrices are given respectively by
\begin{align}
  \label{eq:coupling_transform}
  M_0 \mapsto (W^T)^{-1} M_0 W^{-1}, \, \mathcal{C}^{-1} \mapsto W \mathcal{C}^{-1} W^T.
\end{align}
The other coupling matrices transform as
\begin{align}
  N \mapsto (W^T)^{-1} N, \, C_V \mapsto (W^T)^{-1} C_V.
\end{align}
In general, the junction term will be transformed as well, but this will be applicable only for one of the techniques we provide to reduce coupling between modes described in~\cref{sec:full_symplectic}.

\section{Free Mode Removal} \label{sec:free_mode}
Free modes are degrees of freedom for which only the charge operator appears in the Hamiltonian. They can occur for instance if the circuit is floating. Taking the fluxes to be the generalized coordinates and the charges to be the conjugate momenta, these modes have no potential, hence the name. We show that free modes can be transformed to be completely independent of the other modes. We can also easily adapt this technique to the dual scenario where only the flux operator appears for certain modes. 

Free modes cannot be simply removed directly from the Hamiltonian since they can be charge coupled to the other modes. Another complication is that free modes are not always explicit in the Hamiltonian and a transformation of the modes may be necessary to identify and separate them. Consider~\cref{eq:circ_ham}. The number of free modes is given by
\begin{align}
  \label{eq:num_free}
  F \equiv \dim(\ker(M_0) \cap \ker(N^T) \cap V_L),
\end{align}
where $V_L$ is the subspace spanned by inductor fluxes~\footnote{Note that in~\cite{burkard2005circuit} the conditions on the circuit are such that the modes are never perfectly free but have a vanishingly small potential term. In this case a small threshold is set instead. See the example for more details. }. More explicitly, in the formulation of~\cite{burkard2005circuit} the fluxes $\Phi_i$ in $\vec \Phi$ correspond to fluxes across either junctions or inductors, and $V_L$ is defined as the subspace of vectors $\vec v$ for which the elements $v_i$ corresponding to junction fluxes (the same correspondence as $\vec \Phi$) are zero. Now, we need to perform an appropriate linear canonical transformation on our modes to make the free modes explicit. This would be a orthogonal transformation $W$ on $\vec \Phi$ so that flux operators form a basis for the subspace in~\cref{eq:num_free}. By~\cref{eq:lin_can}, the same transformation $W$ will be applied to $\vec Q$. Then, we can partition our modes as $\vec \Phi = (\Phi_1, \cdots, \Phi_F, \cdots \Phi_n)$, where $\Phi_1, \cdots, \Phi_F$ are the flux operators for the free modes, and similarly for $\vec Q$. Mathematically, this means after the transformation the first $F$ rows of $N$ and the first $F$ rows and columns of $M_0$ are all zeros. In this form the free modes can be identified and separated.

The proposed approach to remove free modes is to linearly transform the modes to implement Gaussian elimination on $\mathcal{C}$, the effective capacitance matrix of the circuit derived in~\cite{burkard2005circuit} and the inverse of the charge coupling matrix $\mathcal{C}^{-1}$. This is somewhat counterintuitive since the problem is the charge coupling between free modes and other modes, and so it may seem more natural to implement Gaussian elimination directly on $\mathcal{C}^{-1}$. However, we choose to do it on $\mathcal{C}$ since~\cref{eq:coupling_transform} implies that \textit{the corresponding transformation on $M_0$ will implement the same Gaussian elimination}. By definition, the free modes have vanishing potential, and so the flux observables do not appear in the Hamiltonian. Thus, the Gaussian elimination procedure will preserve $M_0$ and $N$, the flux coupling matrices. This ensures that the free modes are still free after the transformation and not mixed with other modes. Furthermore, removing off-diagonal terms in $\mathcal{C}$ for the free modes also removes them for $\mathcal{C}^{-1}$. Hence, after our transformation we obtain an explicit set of free modes which are decoupled from all other modes, allowing us to remove them at will.

The proofs for why our algorithm can be run and is correct are technical but standard Gaussian elimination, and we include the details in~\cref{sec:gauss_elim}. For purely implementation purposes, the details can be skipped save for how to generate the $W$ matrix, which is everything up to~\cref{eq:free_W}. Interestingly, we find that the final Hamiltonian of the nonfree modes is the same as that of just removing the terms involving free modes except for the voltage coupling matrix, whose final form depends on the linear canonical transformation matrix $W$ that we obtain in this algorithm. In particular, the flux and charge couplings between nonfree modes are preserved, and the flux operators of the nonfree modes are not transformed, thereby preserving the junction term in the Hamiltonian. Hence, procedurally it is sufficient to compute the $W$ matrix as outlined below and transform $C_V$ via
\begin{align}
  C_V \mapsto (W^T)^{-1} C_V.
\end{align}
Then, the correct Hamiltonian for the other modes is obtained by removing all the terms involving the free modes, including charge coupling terms.

%We give in Alg.~\ref{alg:free_remove} the outline of the algorithm to compute the linear canonical transformation to remove free modes from the Hamiltonian of a general superconducting quantum circuit. 
%\begin{framed}
%\noindent\hypertarget{alg:free_remove}{\textbf{Algorithm 1:} Algorithm for free mode removal}\\ %This is hack-y
%\begin{algorithmic}
%  \If
%  \EndIf
%\end{algorithmic}
%\end{framed}

\subsection{Example}
We give a detailed example of removing free modes in a circuit. Consider a Cooper-pair box (Josephson junction and capacitor in parallel) with a drive and capacitively coupled to a resonator (LC circuit). A circuit diagram is shown in~\cref{fig:qubit_resonator}.
\begin{figure}
  \includegraphics[width = 0.48\textwidth]{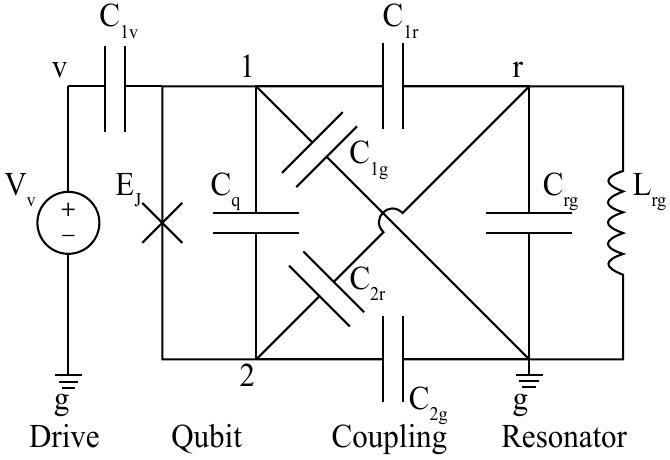}
  \caption{Circuit diagram of a Cooper-pair box with a drive and capacitively coupled to a resonator. The spanning tree edges are $[1,2], [g,1], [r,g], [g,v]$. \label{fig:qubit_resonator}}
\end{figure}
The circuit elements and the nodes are labeled for reference. 

We can give fictitious numerical values for all the circuit parameters. The capacitances are given in~\cref{tab:caps}.
\begin{table}
  \caption{Capacitance table. Every edge is defined by the pair of nodes with labels defined in~\cref{fig:qubit_resonator}. The capacitances on the edges not listed in the table are neglected.
  \label{tab:caps}}
  \begin{ruledtabular}
  \begin{tabular}{|c|c|c|c|c|c|c|c|}
    \hline
    Edge & $[1, g]$ & $[2,g]$ & $[r,g]$&$[1, 2]$&$[1,r]$&$[2,r]$ &$[1,v]$ \\
    \hline
    Capacitance (fF) & 1 & 2 & 30 & 40 & 5 & 6 & 0.007\\
    \hline
  \end{tabular}
  \end{ruledtabular}
\end{table}
We also take $L_{rg} = 800$nH, and $E_J = 3$GHz$\cdot h$.

The Hamiltonian of this circuit is
\begin{align}
  H = &\frac{1}{2} (\vec Q - C_V \vec V)^T  \mathcal{C}^{-1} (\vec Q- C_V \vec V) - E_J \cos \left(\frac{2 \pi}{\Phi_0} \Phi_q \right) \nonumber\\
  & + \frac{1}{2} L_{rg}^{-1} \Phi_{rg}^2,
\end{align}
where $E_J = 3$GHz$\cdot h$, $L_{rg}^{-1} = 0.0204$GHz$\cdot h$, and
\begin{align}
  \vec \Phi = (\Phi_{1g}, \Phi_{q}, \Phi_{rg})^T,
  \vec Q = (Q_{1g}, Q_{q}, Q_{rg})^T.
\end{align}
The subscripts describe which branch of the circuit the observable corresponds to, and we will also use it to denote modes and their corresponding subspaces. Note that for this particular circuit the charge operators are exactly the charges across the edges of the spanning tree, instead of linear combinations. The coupling matrices are given by
\begin{align}
  \mathcal{C}^{-1} = 
  \begin{pmatrix}
    15.2 & -2.06 & 3.78\\
    -2.06 & 3.57 & -0.0312\\
    3.78 & -0.0312 & 4.79
  \end{pmatrix}
\end{align}
and
\begin{align}
  M_0 = 
  \begin{pmatrix}
    0 & 0 & 0\\
    0 & 0 & 0\\
    0 & 0 & 0.0204
  \end{pmatrix}.
\end{align}
Throughout we keep three significant figures and use GHz$\cdot h$ as our units for energy. We use energy units for $M_0, \mathcal{C}^{-1}$ by replacing $\Phi,Q$ respectively with the dimensionless operators $\phi,n$ where
\begin{align}
  \Phi = \frac{\hbar}{2e} \phi, Q = 2e n.
\end{align}
The number of free modes is 1 and is here the $1g$ mode since the inductor subspace is spanned by the $1g$ and $rg$ subspaces~\footnote{Note that we actually add an inductor with a very large inductance on the $1g$ edge which we do not include in the diagram for simplicity. This is to satisfy the conditions for circuit quantization in~\cite{burkard2005circuit}, but the mode will be flagged as free since its eigenvalue for $M_0$ is below a small threshold that we set. Our algorithm then sets the corresponding coefficient of the flux operator to zero. This procedure treats the large inductor as a placeholder and does not take the infinite inductance limit as outlined in~\cite{koch2009charging}. }, and the kernel of $M_0$ is spanned by the $q$ and $1g$ subspaces. We also give the voltage coupling matrix 
\begin{align}
  C_V = 
  \begin{pmatrix}
    -21.8\\
    0\\
    0
  \end{pmatrix}.
\end{align}
Only the transformation of this matrix is nontrivial. The units for this matrix is Farad/(2e), which is a unit of inverse voltage.

%If we try to diagonalize this Hamiltonian, using for instance the Lanczos algorithm to obtain the lowest eigenvalues, the same eigenvalue always appears since the free mode renders the spectrum continuous. At the flux sweet spot (local minimum in energy with respect to external magnetic flux, corresponds to a $\pi$ phase shift in the cosine term of the Hamiltonian) the, for instance ten, lowest eigenvalues are all computed to be $-0.999$GHz$\cdot h$. Hence the spectrum of the qubit cannot be obtained, and instead we infer that the spectrum of the circuit with the free mode is continuous.
%\begin{figure}
%  \includegraphics[width = 0.8 \textwidth]{free_spec.png}
%  \caption{Spectrum of circuit with free mode.\label{fig:free_spec}}
%\end{figure}
%In particular, the spectrum with the free mode covers all real numbers greater than or equal to $-0.999$GHz$\cdot h$.

Now, it is clear that we are already in a basis where the free mode appears: the $1g$ mode has no potential term. Hence, we can directly go to Gaussian elimination. Since there is only one free mode, by our definition given in~\cref{eq:W_def}, we can obtain the $W$ matrix:
\begin{align}
  (W^T)^{-1} = W_1 =
  \begin{pmatrix}
    -1 & 0 & 0\\
    -0.571 & 1 & 0\\
    0.785 & 0 & 1
  \end{pmatrix}.
\end{align}
A direct calculation yields
\begin{align}
  \mathcal{C}'^{-1} &= W \mathcal{C}^{-1} W^T \nonumber \\
  & = 
  \begin{pmatrix}
    11.1 & 0 & 0\\
    0 & 3.57 & -0.0312\\
    0 & -0.0312 & 4.79
  \end{pmatrix}.
\end{align}
As expected, $\mathcal{C}'^{-1}$ is block diagonal and the charge couplings of the nonfree modes are preserved. The transformed modes are given by
\begin{eqnarray}
  &   & (n_1', n_2', n_3') \nonumber\\
  & = & ( (W^T)^{-1} \vec n)^T \nonumber\\
  & = & (- n_{1g}, -0.571 n_{1g} + n_q, 0.785 n_{1g}+n_r),
\end{eqnarray}
\begin{eqnarray}
  & & (\phi_1', \phi_2', \phi_3') \nonumber\\
  & = & (W \vec \phi)^T \nonumber\\
  & = & (- \phi_{1g} - 0.571 \phi_q + 0.785 \phi_{rg}, \phi_q, \phi_{rg}).
\end{eqnarray}
The flux operators of the nonfree modes are preserved, as expected. This preserves the Josephson junction term in the Hamiltonian. Also, note that since $W= (W_1^T)^{-1}$, it is its own inverse. Furthermore, the transformed $C_V$ matrix is given by
\begin{align}
  C_V' = (W^T)^{-1} C_V = 
  \begin{pmatrix}
    21.8\\
    12.5\\
    -17.2
  \end{pmatrix}.
\end{align}
Note that the voltage coupling matrix we obtain here is \textit{not} what we would obtain if we na\"ively simply remove all the free mode terms in the original Hamiltonian. In fact, had we done that, there would have been no voltage coupling after the free mode is removed.

Taking out the terms involving the free mode, Hamiltonian on the unfree modes is therefore given by
\begin{align}
  H_{\backslash F} =& \frac{1}{2} (n_2', n_3') 
  \begin{pmatrix}
    3.57 & -0.0312\\
    -0.0312 & 4.79
  \end{pmatrix}
  \begin{pmatrix}
    n_2'\\
    n_3'
  \end{pmatrix}\nonumber \\
  & - 3 \cos \phi_q + \frac{1}{2} \times 0.0204 \times \phi_{rg}^2 \nonumber\\
  & - V_v 
  \begin{pmatrix}
    12.5 & -17.2
  \end{pmatrix} 
  \begin{pmatrix}
    3.57 & -0.0312\\
    -0.0312 & 4.79
  \end{pmatrix}
  \begin{pmatrix}
    n_2'\\
    n_3'
  \end{pmatrix}.
  \label{eq:unfree_ham}
\end{align}

Now that the Hamiltonian on the modes relevant for circuit dynamics is obtained, we can conduct numerical analyses. For example, we can diagonalize the Hamiltonian to obtain the ten lowest energy eigenvalues shown in~\cref{tab:energy} and plotted in~\cref{fig:unfree_spec}. We can compute the frequency of the circuit, which is the difference between the two lowest eigenvalues: $-0.00981-(-0.999) = 0.989$GHz.
\begin{table}
  \caption{Table of energy levels for Hamiltonian in~\cref{eq:unfree_ham}} \label{tab:energy}
  \begin{ruledtabular}
  \begin{tabular}{|c|c|c|c|c|c|}
    \hline
    Energy level & 0 & 1 & 2 & 3 & 4   \\%& 7 & 8 & 9\\
    \hline
    Energy (GHz$\cdot h$) & -0.999 & -0.00981 & 0.979 & 1.88 & 1.97   \\%& 3.32 & 3.85 & 3.95\\
    \hline
    Energy level & 5 & 6 &  7 & 8 & 9 \\
    \hline
    Energy (GHz$\cdot h$) & 2.87 & 2.96 & 3.32 & 3.85 & 3.95 \\
    \hline
  \end{tabular}
  \end{ruledtabular}
\end{table}
\begin{figure}
  \includegraphics[width = 0.48\textwidth]{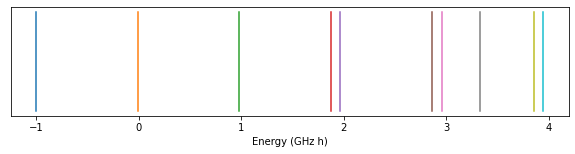}
  \caption{Spectrum of energies listed in~\cref{tab:energy}. \label{fig:unfree_spec}}
\end{figure}

\section{Mode Decoupling Techniques}\label{sec:decouple}
We now turn to the use of linear canonical transformation to decouple the different modes in a superconducting circuits. 

In general, a complex superconducting circuit has many modes that can be strongly coupled. Simulating such a many-body system can be a daunting task. However, it is well-known that the weakly coupled case can be handled via perturbation theory. For completeness we give a high level explanation here. In general, the Hamiltonian can be expressed as
\begin{align}
  H = H_\text{local} + H_\text{couple},
\end{align}
where $H_\text{local}$ consists of all local terms and $H_\text{couple}$ consists of all coupling terms. Then, when the system is weakly coupled, $H_\text{couple}$ can be treated as a perturbation. $H_\text{local}$ is first diagonalized, which simply involves diagonalizing each local Hamiltonian. This can be done using standard numerical techniques for quantum systems with one degree of freedom. For superconducting circuits, this can be done by an appropriate choice of basis depending on the type of mode being considered. For details, see for instance~\cite{rodriguez_2014, kerman2020efficient}. Then, by perturbation theory, the low energy eigenstates of $H$ approximately involve only low energy eigenstates of $H_\text{local}$. Hence, $H$ can be projected into the low energy eigenspace of $H_\text{local}$ and the low energy spectrum and eigenstates of $H$ are approximately preserved. Now, we can express $H_\text{couple}$ as a sum of tensor products of local operators, which are expressed in the eigenbasis of the corresponding local Hamiltonian. Entries involving high energy eigenstates are then truncated, which implements the low energy projection for $H_\text{couple}$. Finally, the diagonalized $H_\text{local}$ for the first few eigenstates are added. The resulting Hamiltonian is a good approximation to $H$ in the low energy eigenspace.

\begin{widetext}
We propose to simulate strongly coupled multimode circuits by reducing the problem to the weakly coupled case. The circuit Hamiltonian in~\cref{eq:circ_ham} can be split into local and coupling terms:
\begin{align}
  H = H_\text{local} + H_\text{couple},
\end{align}
where
\begin{align}
  H_\text{local} &  \equiv \sum_{i=1}^{n_J} \Bigg[\frac{1}{2} \mathcal{C}^{-1}_{ii} Q_i^2 + \frac{1}{2} (M_0)_{ii} \Phi_i^2 + (N \vec \Phi_x)_i \Phi_i - E_{J,i} \cos \left(\frac{2 \pi}{\Phi_0} \Phi_i \right)- Q_i (\mathcal{C}^{-1} C_V \vec V)_i\Bigg] \nonumber\\
  & + \sum_{i=1}^{n_L} \Bigg[\frac{1}{2} \mathcal{C}^{-1}_{ii} Q_i^2 + \frac{1}{2} (M_0)_{ii} \Phi_i^2 + (N \vec \Phi_x)_i \Phi_i - Q_i (\mathcal{C}^{-1} C_V \vec V)_i\Bigg], \label{eq:local_ham}
%  & H_\text{local} \nonumber\\
%  & \equiv \sum_{i=1}^{n_J} \Bigg[\frac{1}{2} \mathcal{C}^{-1}_{ii} Q_i^2 + \frac{1}{2} (M_0)_{ii} \Phi_i^2 + (N \vec \Phi_x)_i \Phi_i \nonumber\\
%  & - E_{J,i} \cos \left(\frac{2 \pi}{\Phi_0} \Phi_i \right)- Q_i (\mathcal{C}^{-1} C_V \vec V)_i\Bigg] \nonumber\\
%  & + \sum_{i=1}^{n_L} \Bigg[\frac{1}{2} \mathcal{C}^{-1}_{ii} Q_i^2 + \frac{1}{2} (M_0)_{ii} \Phi_i^2 + (N \vec \Phi_x)_i \Phi_i\nonumber\\
%  & - Q_i (\mathcal{C}^{-1} C_V \vec V)_i\Bigg], \label{eq:local_ham}
\end{align}
$n_J, n_L$ being respectively the number of junction and inductor modes, and
\begin{align}
  H_\text{couple} \equiv \sum_{\substack{i,j=1\\i\neq j}}^{n} \frac{1}{2} \mathcal{C}^{-1}_{ij} Q_i Q_j + \frac{1}{2} (M_0)_{ij} \Phi_i \Phi_j,
\end{align}
where $n \equiv n_J + n_L$ is the total number of modes. 
\end{widetext}
Note that we omit the term proportional to the identity coming from the quadratic term in the voltages $\vec V$ since this does not affect the physics. Now, if the coupling terms $\mathcal{C}_{ij}^{-1}, (M_0)_{ij}$ are small, then $H$ describes a weakly coupled quantum many-body system for which the above method can be applied. 

Since the coupling terms are simply quadratic forms, we turn to linear canonical transformations given in~\cref{eq:lin_can}, which transforms the flux and charge coupling matrices as~\cref{eq:coupling_transform}. By choosing appropriate transformations, we can reduce the coupling terms, which are given by the off-diagonal entries of $M_0, \mathcal{C}^{-1}$. Then,~\cref{lem:lin_can} implies that the transformed Hamiltonian will be unitarily equivalent to the original Hamiltonian. In particular, it has the same spectrum as well as the same drive (term in Hamiltonian that depends on $\vec V$) matrix elements between different eigenstates as the original Hamiltonian. Our decoupling methods can reduce the coupling terms sufficiently to the point that we are able to treat the system as weakly coupled. This makes it numerically feasible to compute both static properties of strongly coupled multimode superconducting circuits, such as spectra, and dynamical properties, such as how a voltage source drives the system. The latter can be computed via the drive matrix elements that follows from diagonalizing the transformed Hamiltonian. Furthermore, throughout we will use the sum of squares of the off-diagonal terms as a rough heuristic of how strongly coupled the circuit modes are. However, this is not directly related to the numerical cost of simulation, and in the end we will need to explicitly compute the necessary Hilbert space dimension to achieve an accurate simulation.

\subsection{Simultaneous Approximate Diagonalization}
One natural linear transformation to use is an orthogonal transformation. Then, $(W^T)^{-1} = W$. This is equivalent to simultaneously orthogonally diagonalizing $\mathcal{C}^{-1}$ and $M_0$. However, in general, the two matrices do not commute and so exact diagonalization is impossible. Instead, we can attempt to approximately simultaneously diagonalize. For instance, an optimization task can be defined: finding an orthogonal matrix $W$ such that the sum of the squares of the off-diagonal terms of the transformed matrices are minimized. However, the orthogonal transformation is restricted to the $n_L$ inductor modes so that the cosine junction terms in the Hamiltonian do not have linear combinations of fluxes, which would make them into coupling terms. This can be seen by expanding cosine as a power series.

This optimization is difficult in general, but one possible approach is to use the algorithm outlined in~\cite{cardoso1996jacobi}. The algorithm given is iterative: it cycles through different rotation axes and rotates by the optimal angle, which has an analytical solution, to minimize the sum of the squares for each axis. This is done until the rotation angles fall below a certain threshold or the maximum number of iterations is exceeded. There are no runtime or convergence guarantees given in the paper. Again, to keep the junction terms local, only the rotation axes corresponding to inductor modes are iterated through. However, the objective function still includes the off-diagonal terms involving junction modes. This technique will be referred to as the simultaneous approximate diagonalization technique.

\subsection{Inductor-only Symplectic Diagonalization}
Another method is to take advantage of a classic theorem regarding symplectic matrices. Similar to how real symmetric matrices can be diagonalized by an orthogonal matrix and normal matrices can be diagonalized by a unitary matrix, a positive definite matrix can be diagonalized by a symplectic matrix. The diagonal matrix also takes on a special form. The precise statement is given by the following. 
\begin{thm}
  \label{thm:williamson}
  (Williamson~\cite{williamson1936algebraic}). Let $M \in \R^{2n \times 2n}$ be a positive definite matrix. Then there exists a symplectic matrix $S \in \mathrm{Sp}(2n)$ and a diagonal matrix $\Lambda \in \R^{n\times n}$ with positive diagonal entries such that
  \begin{align}
    S^T M S = \mathrm{diag}(\Lambda, \Lambda).
  \end{align}
\end{thm}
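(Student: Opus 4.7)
The plan is to reduce Williamson's theorem to the real spectral theorem for antisymmetric matrices by ``twisting'' the symplectic form $\Omega$ by the square root of $M$. Concretely, since $M$ is positive definite, its square root $M^{1/2}$ is well defined, symmetric, and invertible. I would form the matrix
\begin{equation*}
  K \equiv M^{-1/2}\, \Omega\, M^{-1/2},
\end{equation*}
and observe that $K^T = M^{-1/2}\Omega^T M^{-1/2} = -K$, so $K$ is real antisymmetric, and invertible since both $\Omega$ and $M^{-1/2}$ are.

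Next I would invoke the real normal form for antisymmetric matrices: there exists a real orthogonal $O$ and a diagonal matrix $D$ with strictly positive entries such that
\begin{equation*}
  O^T K O \;=\; \begin{pmatrix} 0_n & D \\ -D & 0_n \end{pmatrix}.
\end{equation*}
The standard spectral theorem delivers this in block-diagonal form with $2\times 2$ skew blocks $\bigl(\begin{smallmatrix}0 & \mu_i\\ -\mu_i & 0\end{smallmatrix}\bigr)$; a fixed permutation then rearranges the coordinates into the ``all positions then all momenta'' layout above. Strict positivity of the $\mu_i$ follows from invertibility of $K$ (equivalently, the eigenvalues of $iK$ are nonzero reals coming in $\pm$ pairs), so $D$ really is positive diagonal.

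With these pieces in hand I would define
\begin{equation*}
  S \;\equiv\; M^{-1/2}\, O\, \mathrm{diag}\!\left(D^{-1/2},\, D^{-1/2}\right).
\end{equation*}
A direct block computation gives $S^T \Omega S = \mathrm{diag}(D^{-1/2},D^{-1/2})\,(O^T K O)\,\mathrm{diag}(D^{-1/2},D^{-1/2}) = \Omega$, so $S \in \mathrm{Sp}(2n)$. Similarly, the two $M^{\pm 1/2}$ factors collapse and the orthogonality $O^T O = I_{2n}$ gives $S^T M S = \mathrm{diag}(D^{-1}, D^{-1})$. Setting $\Lambda \equiv D^{-1}$, which is diagonal with positive entries, completes the proof.

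The routine parts are the matrix manipulations verifying the symplectic condition and the diagonalization; the substantive step, and the one I expect to carry the real content, is the reduction to the block form for $K$. The subtlety there is not the existence of the block diagonalization itself (which is classical) but rather (i)~arguing that the skew eigenvalues are nonzero so that $D^{-1/2}$ makes sense, and (ii)~permuting the $2\times 2$ blocks into the ``$\begin{pmatrix}0 & D\\ -D & 0\end{pmatrix}$'' layout that exactly matches the paper's convention for $\Omega$, absorbing the permutation into $O$ so that orthogonality is preserved.
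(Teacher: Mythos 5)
Your proposal is correct and takes essentially the same approach as the paper: both twist $\Omega$ into $K = M^{-1/2}\Omega M^{-1/2}$, bring $K$ to the normal form $\bigl(\begin{smallmatrix}0_n & D\\ -D & 0_n\end{smallmatrix}\bigr)$ by an orthogonal $O$, and define $S = M^{-1/2} O\,\mathrm{diag}(D^{-1/2},D^{-1/2})$. The only difference is presentational --- the paper obtains $O$ by diagonalizing the complex Hermitian matrix $iK$ and taking real and imaginary parts of its eigenvectors, whereas you invoke the real antisymmetric canonical form directly; these are the same fact.
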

\noindent The use of symplectic matrices to diagonalize quadratic bosonic Hamiltonians is well-known in literature~\cite{hormander1995symplectic,nam2016diagonalization}. This was even used to find canonical forms of superconducting circuit Hamiltonians~\cite{parra2018quantum}. In our work, we use them to simplify the numerical analysis of superconducting circuits. Furthermore, we actually prove a stronger statement of Williamson's theorem for the special case when $M$ is block diagonal. Although not difficult to prove, this statement may not be appreciated in the literature at large~\footnote{There is a related result: Lemma 103 of~\cite{de2011symplectic}. It considers the symplectic diagonalization of block diagonal positive definite matrices, but does not conclude our result. }. We now state the result:
\begin{cor}
  \label{cor:block_williamson}
  If $M \in \R^{2n \times 2n}$ is positive definite and is block diagonal with two $n\times n$ blocks, there exists a block diagonal symplectic matrix 
  \begin{align}
    S  = 
    \begin{pmatrix}
      S_n & 0_n\\
      0_n & (S_n^T)^{-1}
    \end{pmatrix} \in \mathrm{Sp}(2n)
  \end{align}
and a diagonal matrix $\Lambda \in \R^{n\times n}$ with positive diagonal entries such that
  \begin{align}
    S^T M S = \mathrm{diag}(\Lambda, \Lambda).
  \end{align}
\end{cor}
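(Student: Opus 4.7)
The plan is to reduce the problem to the simultaneous congruence diagonalization of two symmetric positive definite $n \times n$ matrices, and then close it via the spectral theorem. Write $M = \mathrm{diag}(A,B)$ with $A,B \in \R^{n \times n}$; both are symmetric positive definite since $x^T A x = (x,0)^T M (x,0) > 0$ for $x \neq 0$ and similarly for $B$. Substituting the proposed block diagonal ansatz
\begin{align*}
  S = \begin{pmatrix} S_n & 0_n \\ 0_n & (S_n^T)^{-1} \end{pmatrix}
\end{align*}
into $S^T M S$ and computing block by block, the target identity $S^T M S = \mathrm{diag}(\Lambda,\Lambda)$ becomes the pair of scalar conditions
\begin{align*}
  S_n^T A S_n = \Lambda, \qquad S_n^T B^{-1} S_n = \Lambda^{-1},
\end{align*}
where in the second I have inverted $S_n^{-1} B (S_n^T)^{-1} = \Lambda$. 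The task is thus reduced to producing an invertible $S_n$ and a positive diagonal $\Lambda$ realizing both conditions simultaneously.

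Next, I would construct $S_n$ by the classical whitening-plus-rotation trick. Take the symmetric positive square root $A^{1/2}$ and look for $S_n$ of the form $S_n = A^{-1/2} O \Lambda^{1/2}$, where $O$ is orthogonal and $\Lambda$ is positive diagonal, both yet to be fixed. This ansatz makes the first condition automatic: $S_n^T A S_n = \Lambda^{1/2} O^T O \Lambda^{1/2} = \Lambda$. Substituting into the second condition and using that diagonal matrices commute, it becomes
\begin{align*}
  O^T \bigl( A^{-1/2} B^{-1} A^{-1/2} \bigr) O = \Lambda^{-2}.
\end{align*}
The matrix in parentheses is symmetric and, being congruent to $B^{-1}$ via the invertible symmetric $A^{-1/2}$, is symmetric positive definite. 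The spectral theorem therefore supplies an orthogonal $O$ and a positive diagonal $D$ with $O^T (A^{-1/2} B^{-1} A^{-1/2}) O = D$, at which point setting $\Lambda \equiv D^{-1/2}$ closes the construction and guarantees positive diagonal entries.

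Finally, I would verify by direct substitution that this $S_n$ satisfies both conditions, assemble $S$ in the block diagonal form above, and note that $S \in \mathrm{Sp}(2n,\R)$ is immediate from the block structure with $(S_n^T)^{-1}$ in the lower block (checking $S^T \Omega S = \Omega$ is a one-line computation). I do not expect any real obstacle. The only conceptual point, and the reason this is a genuine strengthening of Williamson's theorem in the block diagonal setting, is the observation that the apparent extra constraint — that the two congruence diagonals must be exact reciprocals of one another — is not an additional restriction but is forced by using $A^{1/2}$ as the whitening factor; once that choice is made, $\Lambda = D^{-1/2}$ is canonical and the rest is bookkeeping.
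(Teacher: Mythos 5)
Your proposal is correct, and it takes a genuinely different route from the paper. The paper obtains the corollary by re-running its proof of Williamson's theorem and tracking the block structure: it forms the complex Hermitian matrix $i M^{-1/2} \Omega M^{-1/2}$, builds its eigenvectors $\tilde w = (w', w)$ from real eigenvectors $w$ of $B^{-1/2} A^{-1} B^{-1/2}$, observes that the resulting orthogonal matrix $O$ (and hence $S = M^{-1/2} O \tilde D^{-1/2}$) is block \emph{anti}-diagonal, and then multiplies by $\Omega$ on the right to restore block diagonality, using $\Omega^T \tilde D^{-1} \Omega = \tilde D^{-1}$. You instead prove the corollary from scratch: substituting the block ansatz reduces the claim to finding an invertible $S_n$ with $S_n^T A S_n = \Lambda$ and $S_n^T B^{-1} S_n = \Lambda^{-1}$ simultaneously, which your whitening ansatz $S_n = A^{-1/2} O \Lambda^{1/2}$ converts into a single orthogonal diagonalization of the symmetric positive definite matrix $A^{-1/2} B^{-1} A^{-1/2}$, with $\Lambda = D^{-1/2}$ then forced. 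I checked the key step: with that choice, $S_n^T B^{-1} S_n = D^{-1/4} D D^{-1/4} = D^{1/2} = \Lambda^{-1}$, as required, and the two constructions produce the same symplectic spectrum since $A^{-1/2} B^{-1} A^{-1/2}$ and $B^{-1/2} A^{-1} B^{-1/2}$ are similar. Your argument is more elementary and self-contained --- it avoids the complex-Hermitian detour and does not rely on the general Williamson theorem at all --- and it yields a cleaner recipe for implementation; what the paper's route buys is an explicit picture of how the block diagonal case embeds in the general Williamson construction (the diagonalizing $S$ for a block diagonal $M$ is just an $\Omega$-rotation away from being block diagonal itself).
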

\begin{proof}
  We include the proof in~\cref{app:williamson_proof} and for purely implementation purposes, see
supplemental material for the code for finding the desired block diagonal symplectic matrix. 
\end{proof}
\noindent \cref{cor:block_williamson} allows us to diagonalize the quadratic part of the circuit Hamiltonian via a linear canonical transformation as defined in~\cref{eq:lin_can}. In particular, with our block diagonal formulation, flux and charge operators are not mixed with each other. One advantage of this is physical interpretability: the Hamiltonian in terms of the transformed operators is still of the form of a superconducting circuit Hamiltonian. For instance, unphysical terms such as mixed coupling terms involving a flux operator and a charge operator do not appear~\footnote{However, these terms could appear for more general circuits with linear non-reciprocal elements, as seen in~\cite{parra2018quantum}. }. This allows us to use established intuition about circuit Hamiltonians to guide design and possible gate schemes. This also allows us to compose different techniques for analyzing superconducting circuit Hamiltonians. 

Now, assume $M_0$ is positive definite. Considering the definition of $M_0$ in~\cite{burkard2005circuit}, this is true for instance if every Josephson junction is shunted by an inductor, that is, we are working with fluxonium qubits. We could also artificially enforce this by shunting all junctions with inductors with very large inductances~\footnote{This would lead to different boundary conditions for the junction mode as shown in~\cite{koch2009charging}. However, as pointed out in~\cite{vool2017engineering}, this should not affect what we see in experiments due to the large timescales that would have to be involved. }. Then, the block matrix
\begin{align}
  \mathcal Q \equiv
  \begin{pmatrix}
    M_0 & 0_n \\
    0_n & \mathcal{C}^{-1}
  \end{pmatrix}
\end{align}
is positive definite. This matrix encodes the quadratic part of the Hamiltonian, including all the coupling terms. Hence, it is immediate that we should find a linear canonical transformation that diagonalizes $\mathcal{Q}$ by invoking~\cref{cor:block_williamson}, thereby eliminating all the coupling.

However, in order to keep the junction terms in the Hamiltonian local, only the inductor modes are to be transformed. Hence consider the block matrix
\begin{align}
  \mathcal{Q}_L \equiv
  \begin{pmatrix}
    (M_0)_L & 0_{n_L} \\
    0_{n_L} & (\mathcal{C}^{-1})_L
  \end{pmatrix},
\end{align}
where $_L$ means only the submatrices corresponding to inductor modes are taken. Since $\mathcal{Q}$ is positive definite, so is $\mathcal{Q}_L$. Using~\cref{cor:block_williamson}, the matrix $\mathcal{Q}_L$ can be diagonalized by a block diagonal symplectic matrix 
\begin{align}
  S_L = 
  \begin{pmatrix}
    S_{n_L} & 0_{n_L}  \\
    0_{n_L} & (S_{n_L}^T)^{-1}
  \end{pmatrix}
\end{align}
such that $S_L^T \mathcal{Q}_L S_L = \mathrm{diag}(\Lambda, \Lambda)$. Then, by linearly transforming the modes with
\begin{align}
  W \equiv
  \begin{pmatrix}
    I_{n_J} & 0_{n_J \times n_L} \\
    0_{n_L \times n_J} & S_{n_L}^{-1}
  \end{pmatrix}.
\end{align}
the coupling matrices transform as
\begin{align}
  M_0 \mapsto (W^T)^{-1} M_0 W^{-1}=
  \begin{pmatrix}
    (M_0)_J & (M_0)_{LJ}^T S_{n_L}  \\
    S_{n_L}^T (M_0)_{LJ} & \Lambda
  \end{pmatrix},
\end{align}
\begin{align}
  \mathcal{C}^{-1} \mapsto W \mathcal{C}^{-1} W^T=
  \begin{pmatrix}
    \mathcal (C^{-1})_J & (\mathcal C^{-1})_{LJ}^T (S_{n_L}^T)^{-1}\\
    S_{n_L}^{-1} (\mathcal C^{-1})_{LJ} & \Lambda
  \end{pmatrix},
\end{align}
where $_J$ and $_{JL}$ correspond to the submatrices for the junction modes and the coupling coefficients between junction and inductor modes, respectively. Recalling that $\Lambda$ is a diagonal matrix, this effectively fully decouples the inductor modes from each other. However, there might still be coupling between inductor modes and junction modes. The coupling between junction modes is left intact. This will be referred to as the inductor-only symplectic diagonalization decoupling technique.

\subsection{Full Symplectic Diagonalization}\label{sec:full_symplectic}
The final method proposed is to fully diagonalize the quadratic part of the Hamiltonian. This will introduce coupling terms via the Josephson junction terms in the Hamiltonian, but in some cases these new couplings will not be very large. We will also lose the physical interpretability that the previous techniques had.

Specifically, by~\cref{cor:block_williamson}, there exists a symplectic matrix
\begin{align}
  S =
  \begin{pmatrix}
    S_n & 0_n \\
    0_n & (S_n^T)^{-1}
  \end{pmatrix}
\end{align}
such that $S^T \mathcal{Q} S = \mathrm{diag}(\Lambda, \Lambda)$, assuming $M_0$ is positive definite. Then, transform the modes via the linear transformation
\begin{align}
  W \equiv S_n^{-1}.
\end{align}
This completely diagonalizes the quadratic part of the Hamiltonian, thereby removing all quadratic coupling terms. However, the transformed modes are still coupled through the transformed junction term:
\begin{align}
  \sum_{i=1}^{n_J} E_{J,i} \cos \left(\frac{2 \pi}{\Phi_0} \Phi_i \right)  \mapsto \sum_{i=1}^{n_J} E_{J,i} \cos \left(\frac{2 \pi}{\Phi_0} \sum_{j=1}^{n} (S_n)_{ij}\Phi_j' \right),
\end{align}
where $\vec \Phi' \equiv W \vec \Phi$ are the transformed flux variables. However, the power series of cosine leads to additional local terms. The local and coupling terms can be separated as follows:
\begin{align}
  & \sum_{i=1}^{n_J}E_{J,i} \cos \left(\frac{2 \pi}{\Phi_0} \sum_{j=1}^{n} (S_n)_{ij}  \Phi_j' \right) \nonumber\\
  & = \sum_{i=1}^{n_J}E_{J,i} \sum_{k=0}^\infty \frac{(-1)^{k}}{2k!} \left(\frac{2 \pi}{\Phi_0} \sum_{j=1}^{n} (S_n)_{ij} \Phi_j' \right)^{2k}\nonumber\\
  & = \sum_{i=1}^{n_J}E_{J,i} \sum_{k=0}^\infty \frac{(-1)^k}{2k!} \left[\sum_{j=1}^{n} \left(\frac{2 \pi}{\Phi_0} (S_n)_{ij} \Phi_j' \right)^{2k} \right] - H_\text{couple}' \nonumber\\
  & = \sum_{j=1}^{n} \sum_{i=1}^{n_J}E_{J,i} \cos \left(\frac{2 \pi}{\Phi_0} (S_n)_{ij} \Phi_j' \right) - H_\text{couple}'.
\end{align}
%where
%\begin{align}
%  & H_\text{couple}' \nonumber\\
%  &  \equiv -\sum_{i=1}^{n_J}E_{J,i} \sum_{k=0}^\infty \frac{(-1)^k}{2k!} \Bigg[\Bigg(\frac{2 \pi}{\Phi_0} \sum_{j=1}^{n} (S_n)_{ij} \Phi_j' \Bigg)^{2k} \nonumber\\
%  & \quad -\sum_{j=1}^{n} \left(\frac{2 \pi}{\Phi_0} (S_n)_{ij} \Phi_j' \right)^{2k} \Bigg] \nonumber\\
%  & = - \sum_{i=1}^{n_J} E_{J,i} \Bigg[ \cos \left(\frac{2 \pi}{\Phi_0} \sum_{j=1}^{n} (S_n)_{ij} \Phi_j'  \right) \nonumber\\
%  & \quad - \sum_{j=1}^{n}  \cos \left(\frac{2 \pi}{\Phi_0} (S_n)_{ij} \Phi_j' \right) \Bigg].
%\end{align}
\begin{widetext}
Hence, in summary the transformed Hamiltonian is given by
\begin{align}
  H =  H_\text{local}'+  H_\text{couple}' ,
\end{align}
where
\begin{align}
& H_\text{local}' \equiv \sum_{j=1}^{n} \Bigg[ \frac{1}{2}\Lambda_j (\Phi_j'^2 + Q_j'^2) + ( (W^T)^{-1} N) \vec \Phi_x)_j  \Phi_j' - \sum_{i=1}^{n_J}E_{J,i} \cos \left(\frac{2 \pi}{\Phi_0} (S_n)_{ij} \Phi_j' \right) - ( (W^T)^{-1} C_V \vec V)_j \Lambda_j Q_j'\Bigg],
%& H_\text{local}' \nonumber\\
%& \equiv \sum_{j=1}^{n} \Bigg[ \frac{1}{2}\Lambda_j (\Phi_j'^2 + Q_j'^2) + ( (W^T)^{-1} N) \vec \Phi_x)_j  \Phi_j' \nonumber\\
%& - \sum_{i=1}^{n_J}E_{J,i} \cos \left(\frac{2 \pi}{\Phi_0} (S_n)_{ij} \Phi_j' \right) - ( (W^T)^{-1} C_V \vec V)_j \Lambda_j Q_j'\Bigg],
\end{align}
and
\begin{align}
  \label{eq:full_symp_couple}
  H_\text{couple}' = - \sum_{i=1}^{n_J} E_{J,i} \Bigg[ \cos \left(\frac{2 \pi}{\Phi_0} \sum_{j=1}^{n} (S_n)_{ij} \Phi_j'  \right) - \sum_{j=1}^{n}  \cos \left(\frac{2 \pi}{\Phi_0} (S_n)_{ij} \Phi_j' \right) \Bigg].
\end{align}
\end{widetext}
Since the transformation mixes inductor and junction modes, unlike~\cref{eq:local_ham}, the distinction between the two types of modes disappear. This will be referred to as the full symplectic diagonalization decoupling technique. 

For each of the above decoupling techniques, an optimization over spanning trees can be done. This simply iterates over all spanning trees of the circuit consisting of all junctions, all voltage sources, and inductors. Different spanning trees lead to different magnitudes of the coupling terms in the circuit Hamiltonian. A possible optimization that can be defined is to find which spanning tree, combined with either the simultaneous approximate diagonalization or the inductor-only symplectic diagonalization decoupling techniques, leads to transformed $M_0, \mathcal{C}^{-1}$ matrices with the smallest sum of the squares of the off-diagonal entries. A possible optimization for the full symplectic diagonalization decoupling technique is to minimize sum of the squares of the $n_J$ rows of $S_n$ that correspond to the junction modes.

\subsection{Example}
For demonstration purposes, we apply these techniques to the circuit of two inductively coupled fluxonium qubits shown in \cref{fig:ind_coupled}. For clarity, capacitors on edges [1,3], [1,4], [2,3], [2,4], [1,5], [4,6] are not explicitly shown in this figure. The capacitances and inductances are given in~\cref{tab:cap} and~\cref{tab:ind}, respectively.

\begin{figure}
  \centering
  \includegraphics[width = 0.48 \textwidth]{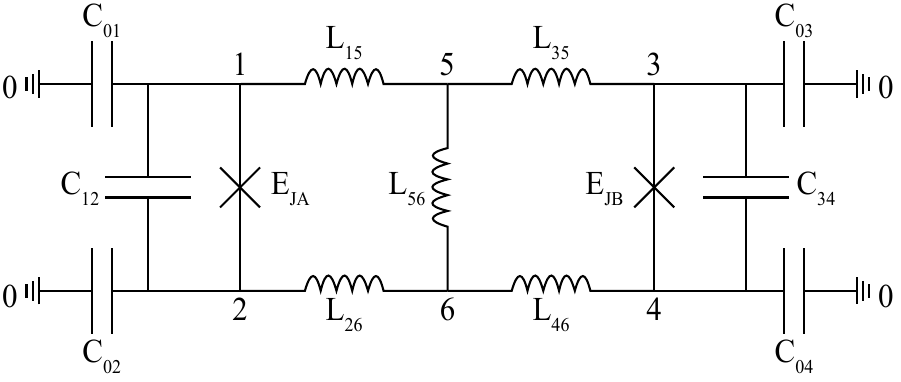}
  \caption{Circuit diagram of two inductively coupled fluxonium qubits. The spanning tree edges are $[1,2], [3,4], [0,1], [1,5], [4,6], [5,6]$. \label{fig:ind_coupled}}
  
\end{figure}

With these parameters, and after removing the free modes as in~\cref{sec:free_mode}, we obtain the Hamiltonian 
\begin{align}
  H = & \frac{1}{2} \vec n^T  \mathcal{C}^{-1} \vec n - \sum_{i=1}^{n_J} E_{J,i} \cos \left(\phi_i \right) + \frac{1}{2} \vec \phi^T M_0 \vec \phi 
\end{align}
with coupling matrices
\begin{align}
	\mathcal{C}^{-1}=
	\begin{pmatrix}
	13.6   & 0.276    & -6.85    & 0.182 & -0.110\\
	0.276  & 12.3     & -0.00711 & 6.027 & -0.0999\\
	-6.85  & -0.00711 & 400      & 378   & -12.2\\
	0.182  & 6.027    & 378      & 400   & 12.2\\
	-0.110 & -0.0999  & -12.2    & 12.2  & 24.6
	\end{pmatrix},
\end{align}

\begin{align}
	M_{0}=
	\begin{pmatrix}
	1.00 & 0     & 1.00 & 0     & 1.00\\
	0    & 1.00  & 0    & -1.00 & 1.00\\
	1.00 & 0     & 2.00 & 0     & 1.00\\
	0    & -1.00 & 0    & 2.00  & -1.00\\
	1.00 & 1.00  & 1.00 & -1.00 & 27.1
	\end{pmatrix}.
\end{align}
This is a quantum system with 6 interacting degrees of freedom. The large number of modes follows directly from the structure of the circuit, which involves an inductor bridge on edge $[5,6]$ to provide the inductive coupling. The coupling terms in the Hamiltonian correspond to the off-diagonal terms in the coupling matrices. The sum of the squares of the off-diagonal terms is $2.87 \times 10^5$. Note that throughout we will keep 3 significant figures and use GHz$\cdot h$ as units. Furthermore, we will set $E_{JA} = 3.9 \mathrm{GHz} \cdot h, E_{JB} = 4.4 \mathrm{GHz} \cdot h$. In the following calculations, we will flip the sign of the Josephson junction terms. This can be done via an external magnetic flux~\footnote{Note that in~\cref{eq:circ_ham}, the external magnetic flux comes in as a linear functional of the flux operators. By an appropriate transformation of the flux operators, this can be turned into a phase shift in the junction terms, assuming the external flux is time independent. This is studied in detail for instance in~\cite{you2019circuit}. } and is the well-known ``sweet spot'' for fluxonium qubits at which there is a first-order insensitivity to flux due to symmetry. This is the point of operation for experiments (see for instance~\cite{nguyen2019high}).

\begin{table}
  \caption{Capacitance table. Every edge is defined by the pair of nodes with labels defined in~\cref{fig:ind_coupled}. The corresponding capacitance is extracted from a realistic layout design and the parasitic capacitances of the Josephson junctions. The capacitances on the edges not listed in the table are neglected. }
  \begin{ruledtabular}
    \begin{tabular}{|c|c|c|c|c|c|c|c|c|}
      \hline 
      Edge &[0,1]&[0,2]&[0,3]&[0,4]&[1,2]&[3,4]&[1,3]&[1,4]\\
      \hline
      Capacitance (fF) &8.3&9.3&10.6&11.8&6.4&6.4&0.9&0.3\\
      \hline
      Edge &[2,3]&[2,4]&[1,5]&[2,6]&[3,5]&[4,6]&[5,6]&\\
      \hline
      Capacitance (fF) &0.3&0.7&0.1& 0.1&0.1&0.1&6.2&\\
      \hline
    \end{tabular}
  \end{ruledtabular}
  \label{tab:cap}
\end{table}

\begin{table}
  \caption{Inductance table. Every edge is defined by the pair of nodes with labels defined in~\cref{fig:ind_coupled}.}
  \begin{ruledtabular}
     \begin{tabular}{|c|c|c|c|c|c|}
      \hline 
      Edge & [1,5]&[2,6]&[3,5]&[4,6]&[5,6]\\
      \hline
      Inductance (nH) & 163.5& 163.5& 163.5& 163.5& 6.5\\
      \hline
    \end{tabular}
  \end{ruledtabular}
  \label{tab:ind}
\end{table}

We can directly diagonalize the Hamiltonian and plot the energies of the low energy states as a function of the local basis cutoff $d$. That is, we project the overall Hamiltonian onto the tensor product of the $d$-dimensional lowest energy eigenspaces of the local Hamiltonians. The results are shown in~\cref{fig:E-cutoff}(a). The dimension $d$ is for each individual mode, so the overall dimension is $d^5$.

We apply the simultaneous approximate diagonalization technique, with which we obtain the orthogonal transformation
\begin{align}
  W = 
  \begin{pmatrix}
    1&          0&          0&          0&          0 \\
    0&          1&          0&          0&          0 \\
    0&          0&          0.707&      0.707&      0 \\
    0&          0&         -0.510&      0.510&     -0.693\\
    0&          0&         -0.490&      0.490&      0.721 
  \end{pmatrix}.
\end{align}
Note that the first two modes are junction modes and therefore do not transform. This yields the transformed coupling matrices
\begin{align}
  & W \mathcal{C}^{-1} W^T \nonumber\\ 
  & =  
  	\begin{pmatrix}
	13.6   & 0.276    & -4.72    & 3.66   & 3.37\\
	0.276  & 12.3     & 4.26     & 3.15   & 2.89\\
	-4.72  & 4.26     & 778      & 0      & 0 \\
	3.66   & 3.15     & 0        & 5.89   & -0.667\\
	3.37   & 2.89     & 0        & -0.667 & 40.6
	\end{pmatrix}
\end{align}
and
\begin{align}
  & (W^T)^{-1} M_0 W^{-1}  \nonumber\\ 
  & = W M_0 W^T  \nonumber\\ 
  & =
  	\begin{pmatrix}
	1.00  & 0     & 0.707 & -1.20  & 0.231 \\
	0     & 1.00  &-0.707 & -1.20  & 0.231 \\
	0.707 &-0.707 & 2.00  & 0      & 0\\
	-1.20 & -1.20 & 0     & 15.5   & -12.6\\
	0.231 & 0.231 & 0     &-12.6   & 13.7
      \end{pmatrix}.
\end{align}
With this transformation, the sum of the squares of the off-diagonal terms is reduced to 494, a reduction of three orders of magnitude. We plot the state energies as a function of local basis dimension in~\cref{fig:E-cutoff}(b), which shows faster convergence than the case where no decoupling techniques is applied as in~\cref{fig:E-cutoff}(a).

Next we apply the inductor-only symplectic diagonalization decoupling technique. Using the above mentioned algorithm we obtain the transformation
\begin{align}
  W =
  	\begin{pmatrix}
    1&          0&          0&          0&          0 \\
    0&          1&          0&          0&          0 \\
    0&          0&          0.476&     -0.477&      0.474 \\
    0&          0&          0&          0&          1.02\\
    0&          0&         -0.159&     -0.159&      0 
  	\end{pmatrix}.
\end{align}
Again, only the inductor modes are transformed. The transformed coupling matrices are 
\begin{align}
  W \mathcal{C}^{-1} W^T =
    \begin{pmatrix}
     13.6&   0.276& -3.40&  -0.113&  1.06 \\
	 0.276&  12.3&  -2.92&  -0.103& -0.958\\
	-3.40&  -2.92&   4.40&   0&      0    \\
	-0.113& -0.103&  0&      25.4&   0    \\
	 1.06&  -0.958&  0&      0&      39.4 
	\end{pmatrix}
\end{align}
and
\begin{align}
  & (W^T)^{-1} M_0 W^{-1} \nonumber\\
  & = 
  	\begin{pmatrix} 
  	1.00&    0&      1.05&  0.494&  -3.14& \\
	0&       1.00&   1.05&  0.495&   3.14& \\
	1.05&    1.05&   4.40&  0&        0   \\
	0.494&   0.495&  0&     25.4&     0    \\
	-3.14&   3.14&   0&     0&       39.4
      \end{pmatrix}.
\end{align}
As expected, the submatrix for the inductor subspace is diagonal. The sum of the squares of off-diagonal terms is 89.3, which is less than one fifth of that of the simultaneous approximate diagonalization technique. Also, the largest elements are about one fourth as large. We plot the state energies as a function of local basis dimension for this technique in~\cref{fig:E-cutoff}(c), which shows faster convergence than those of the simultaneous approximate diagonalization and no decoupling cases.

Finally, we implement the full symplectic diagonalization technique. With the above algorithm, we obtain the transformation matrix 
\begin{align}
  & W  \nonumber\\
  	&= \begin{pmatrix}
	-0.0195&   0.441&  -0.0112&   0.0112&  -0.0137& \\
 	 0.442&    0.0122&  0.0140&  -0.0140&   0.0112& \\
 	 0.256&    0.271&   0.529&   -0.529&    0.529&  \\
 	 0.0195&   0.0195&  0&         0&       1.02 \\
 	 0.0797&  -0.0797&  0.159&    0.159&    0    
  	\end{pmatrix}.
\end{align}
The junction modes are also transformed to fully diagonalize the coupling matrices. The transformed matrices are the same diagonal matrix
\begin{align}
  W \mathcal{C}^{-1} W^T & = (W^T)^{-1} M_0 W^{-1}\nonumber \\
  & = \Lambda \nonumber \\
  & = 
  	\begin{pmatrix}
    2.46 & 0    & 0    & 0    & 0\\
    0    & 2.58 & 0    & 0    & 0\\
    0    & 0    & 3.57 & 0    & 0\\
    0    & 0    & 0    & 25.4 & 0\\
    0    & 0    & 0    & 0    & 39.4
  \end{pmatrix}.
\end{align}

Now, all the coupling terms are in the junction terms. For that we need to consider the first $n_J=2$ rows of $S_n = W^{-1}$, 
\begin{align}
  	\begin{pmatrix}
 	-0.0259&  2.29&   -0.0614&  0.00628&  0& \\
 	 2.24&     0.0721&  0.0454&  0.00569&  0& 
       \end{pmatrix},
\end{align}
whose elements appear in the transformed Hamiltonian. We plot the state energies as a function of local basis dimensions in~\cref{fig:E-cutoff}(d), which shows the fastest convergence among all the different decoupling techniques. 

\begin{figure}
  \centering
  \includegraphics[width = 0.48 \textwidth]{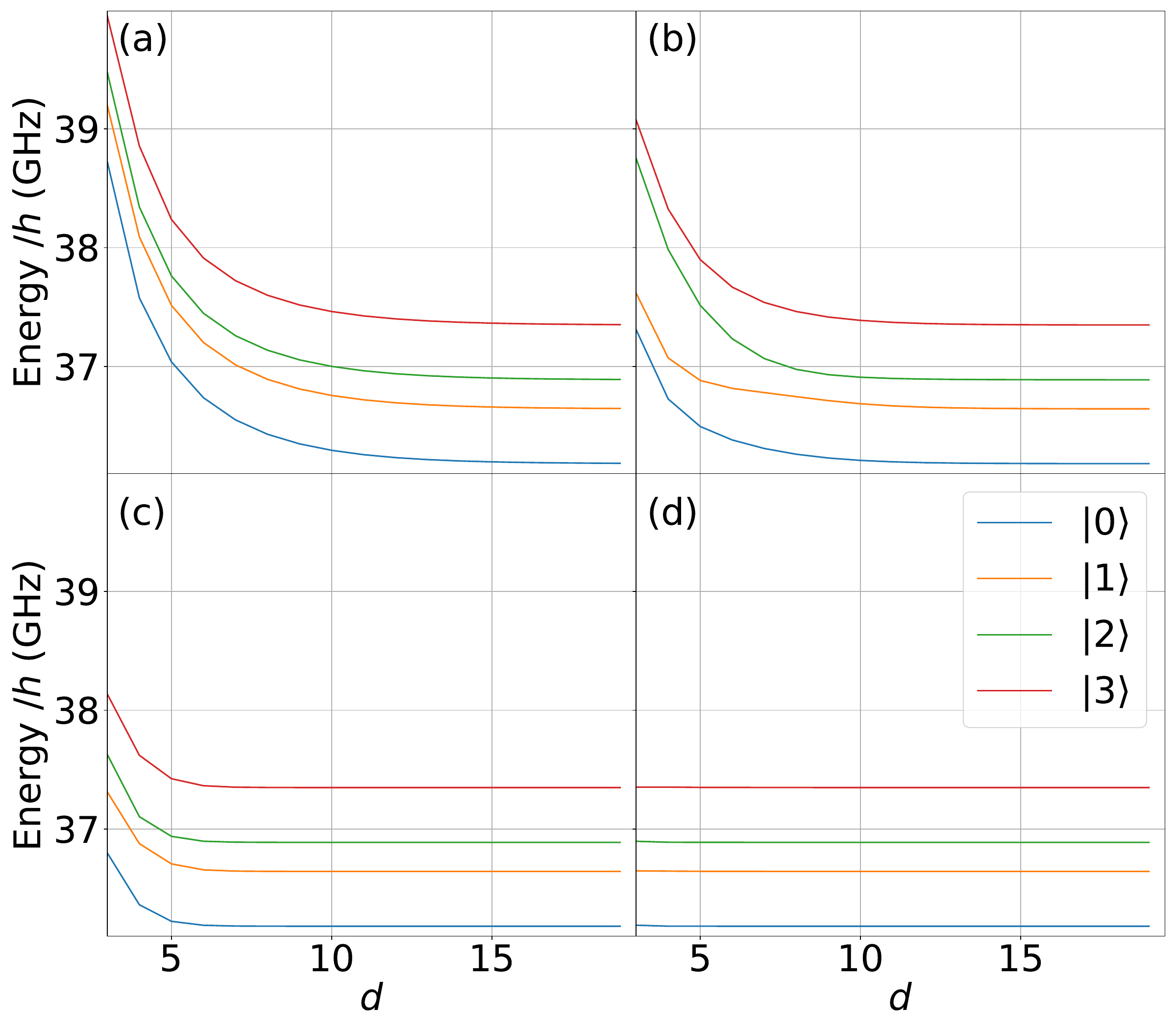}
  \caption{Energies of four lowest energy states $\left|0\right\rangle, \left|1\right\rangle, \left|2\right\rangle, \left|3\right\rangle$ as a function of local basis cutoff $d$ with (a) no decoupling, (b) simultaneous approximate diagonalization decoupling, (c) inductor-only symplectic decoupling, and (d) full symplectic diagonalization decoupling. \label{fig:E-cutoff}}
\end{figure}

To further investigate the local basis cutoff convergence, we compute the reduced density matrices $\rho_k$ for every mode $k$. In general, the eigenstate of an $n$-mode Hamiltonian has a wavefunction
\begin{align}
  \label{eq:wave_func}
  \left|\Psi\right\rangle =\sum_{\left\{ m_{k}\right\} }C\left(m_{1},m_{2},\cdots,m_{n}\right)\left|m_{1}m_{2}\cdots m_{n}\right\rangle ,
\end{align}
where $m_k$ labels the local eigenstate of mode $k$, and $C\left(m_{1},m_{2},\cdots,m_{n}\right)$ is the amplitude of state $\left|m_{1}m_{2}\cdots m_{n}\right\rangle$. The sum over $\left\{ m_{k}\right\}$ is over all possible local state configurations for the given local basis cutoffs. The reduced density matrix of mode $k$ is then obtained by tracing out all possible local states except mode $k$:
\begin{align}
  \label{eq:reduced_dense_mat}
  \rho_{k}\left(m_{k},m_{k}^{\prime}\right)=\sum_{\{m_{i}\neq m_{k}\}}& C^{*}\left(m_{1},\cdots,m_{k},\cdots,m_{n}\right)\nonumber\\
  & C\left(m_{1},\cdots,m_{k}^{\prime},\cdots,m_{n}\right).
\end{align}
The diagonal entry $\rho_k\left(m_k, m_k\right)$ represents the population of local state $m_k$ of mode $k$. If mode $k$ is well decoupled from other modes, $\rho_k\left(m_k, m_k\right)$ should decay quickly with increasing $m_k$ for low energy states of the overall Hamiltonian. Therefore, we can compare the performance of the decoupling techniques by computing $\rho_k\left(m_k, m_k\right)$ as a function of $m_k$ for all modes, as shown in~\cref{fig:weight-index}. From this figure, we can see that for all three decoupling techniques, the local state populations decay faster than when no decoupling technique applied. The techniques in increasing order of efficacy are the simultaneous approximate diagonalization technique, the inductor-only symplectic diagonalization technique, and the full symplectic diagonalization technique. This is consistent with what we see in~\cref{fig:E-cutoff}. Note that although we have presented the different decoupling techniques in increasing efficacy for this example, there is a priori no reason to expect this for other circuits.

Figure \ref{fig:weight-index} also shows that the decay rate of the local state populations may be significantly different. For instance in~\cref{fig:weight-index}(b), when $m_k=5$, the population of mode $4$ is about four orders of magnitude larger than that of mode $3$. This inspires an algorithm to find an appropriate local basis cutoff for each mode. We can choose a population threshold $\epsilon$ and dynamically determine the minimal basis cutoff $d_k$ of each mode to satisfy the criterion $\rho_k\left(d_k,d_k\right) < \epsilon$. As a result, the cutoff of each mode may be very different, and the computational cost can by greatly reduced compared to using a uniform cutoff to achieve the same accuracy. For instance, if we set $\epsilon=10^{-17}$ and apply the full symplectic diagonalization decoupling technique to compute the ground state, the cutoff dimensions are $(54, 53, 8, 5, 2)$. The total Hilbert space dimension is more than three orders of magnitude smaller than that of using a uniform cutoff $(54, 54, 54, 54, 54)$, and the computational cost can be about six orders of magnitude smaller if we use the standard Arnoldi iteration method to diagonalize the Hamiltonian.
 
\begin{figure}
  \centering
  \includegraphics[width = 0.48 \textwidth]{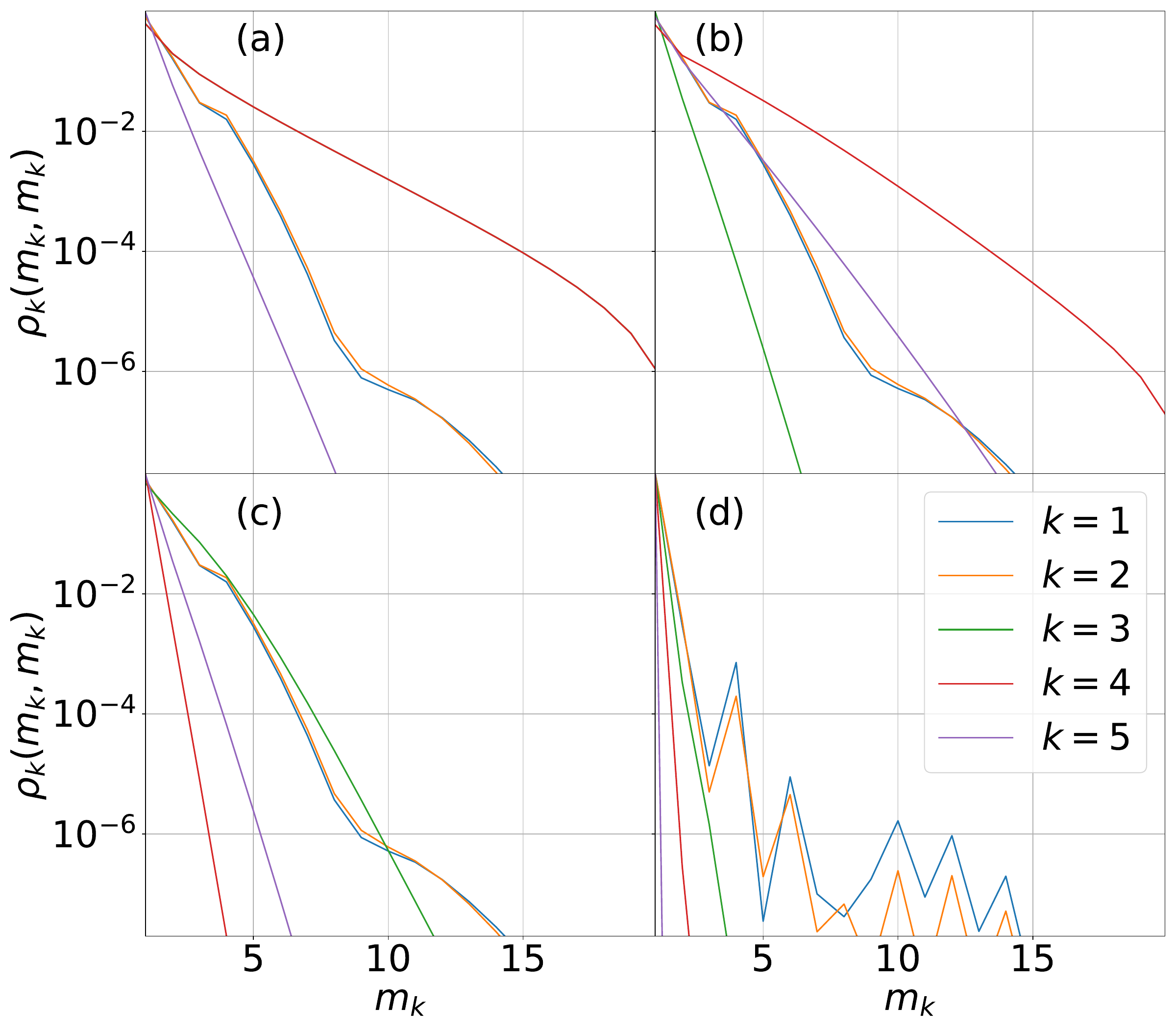}
  \caption{The diagonal entries of ground state reduced density matrices for each mode as a function of local state $m_k$ with (a) no decoupling, (b) simultaneous approximate diagonalization decoupling, (c) inductor-only symplectic decoupling, and (d) full symplectic diagonalization decoupling. In panel (a), the lines of $k=3$ (blue) and $k=4$ (red) are almost on top of each other. \label{fig:weight-index}}
\end{figure}

Using this algorithm to find the optimal local basis cutoffs, we can directly compare our different decoupling techniques via the Hilbert space sizes needed to go below the same population threshold $\epsilon$. For comparison and so that the runtimes are reasonable, we set $\epsilon = 10^{-5}$. The results are shown in~\cref{tab:hilbert_size}. The total Hilbert space size is reduced from 98,304 to 360, a two to three orders of magnitude reduction. Another interesting observation is that since the full symplectic decoupling technique mixes both inductor and junction modes, the local basis cutoffs look qualitatively different from those of the other techniques.
\begin{table}
	\caption{Different decoupling techniques and the local basis cutoffs needed to achieve $\epsilon =10^{-5}$. For succinctness we use the acronyms SAD for simultaneous approximate diagonalization, IS for inductor-only symplectic, and FS for full symplectic.}
  \begin{ruledtabular}
     \begin{tabular}{|c|c|c|}
      \hline 
      Technique & Local basis cutoffs & Total dimension \\
      \hline
      None & (8, 8, 16, 16, 6) & 98,304\\
      \hline
      SAD & (8, 8, 5, 16, 10) & 51,200\\
      \hline
      IS & (8, 8, 9, 4, 5) & 11,520\\
      \hline
      FS & (6, 5, 3, 2, 2) & 360\\
      \hline
    \end{tabular}
  \end{ruledtabular}
  \label{tab:hilbert_size}
\end{table}

\section{Conclusion and Future Work}\label{sec:conclusion}
In this paper we solve two general issues that arise when simulating a general superconducting quantum circuit. The first is the existence of free modes, and the second is the presence of many circuit modes. Our techniques help open the door to the simulation of complex, highly anharmonic superconducting circuits as alternative and possibly more promising paths to a scalable and universal quantum computer. The core mathematical tool we use is the linear canonical transformation, which we find suitable for the mostly quadratic circuit Hamiltonian.

However, since our Hamiltonian is not completely quadratic, it may be worthwhile to look into nonlinear canonical transformations, whose role in quantum mechanics has been studied in works such as~\cite{mello1975nonlinear,anderson1994canonical}. Another interesting direction is the wealth of literature regarding equivalences of classical circuits. This is actually one of the main ideas behind blackbox quantization for weakly anharmonic circuits, where the notion of Foster equivalent circuits is used extensively. It may be interesting to consider the quantum analog of other circuit equivalences, including nonlinear circuit equivalences. This may also be a tantalizing avenue to developing a fundamental theory of quantum circuits.

Some of our decoupling techniques can be improved in various ways. For instance, the spanning tree optimization we propose na\"ively runs through all possible spanning trees of the circuit, which, by Cayley's formula, can be up to $v^{v-2}$ for a complete graph, where $v$ is the number of nodes. Hence, it is worthwhile to study how exactly the coupling matrices in the Hamiltonian depends on the spanning tree chosen and improve this optimization procedure. Furthermore, since most of our decoupling techniques keep the superconducting circuit form of the Hamiltonian, it may be interesting to find ways to compose the different decoupling techniques to further reduce couplings. 

Another possible improvement is that for the symplectic decoupling techniques, the symplectic matrix $S$ which diagonalizes $\mathcal{Q}$ in~\cref{cor:block_williamson} is not unique. A trivial example is multiplying a diagonalizing symplectic $S$ by a two copies of a permutation matrix $P$
\begin{align}
  S \mapsto S 
  \begin{pmatrix}
    P & 0_n\\
    0_n & P
  \end{pmatrix},
\end{align}
which effectively re-orders the elements in $\Lambda$. An appropriate choice of $P$ may reduce the coupling terms in the Hamiltonian. We can even consider nonblock diagonal $S$, as it is not clear that block diagonal $S$ minimize the coupling, although it does provide physical interpretability. Also, for the symplectic decoupling techniques, we assumed that $M_0$ is positive definite, which is not true for all circuits. Although it is possible to artificially enforce this by adding large shunting inductors, we can ask for a more elegant solution by extending Williamson's theorem to the positive semidefinite case. This statement was claimed in~\cite{idel2017perturbation} but not proven, and according to private communication, it may actually still be open whether or not it holds. 

For the full symplectic diagonalization technique, we could take one step further by first Taylor expanding the cosine junction term and adding the quadratic terms to $M_0$. That is, we map
\begin{align}
  M_0 \mapsto M_0 + \left( \frac{2\pi}{\Phi_0} \right)^2
  \begin{pmatrix}
    M_{E_J} & 0_{n_J \times n_L} \\
    0_{n_L \times n_J} & 0_{n_L}
  \end{pmatrix},
\end{align}
where $M_{E_J}$ is the $n_J \times n_J$ diagonal matrix with the junction energies on the diagonal. If $M_0$ is positive definite, since $E_{J,i} >0$ the mapping would preserve positive definiteness, and we can then apply full symplectic diagonalization with the new coupling matrices. This would in effect eliminate the quadratic coupling terms in~\cref{eq:full_symp_couple}. However, the flux of the local systems might not be localized around zero, and so we may not be able to treat the low order terms in the cosine expansion perturbatively. Hence, eliminating the quadratic terms may not be what we want. This could be amended by expanding cosine around wherever the flux is localized for that local system, and we leave this for future work.

After removing free modes and reducing intermode couplings, we can use exact diagonalization (ED) to compute the eigenstates. This allows us to obtain all the relevant energy levels to simulate time evolution via matrix exponentiation, but this is feasible only for small circuits. For strongly coupled multimode circuits that are intractable for ED, DMRG can be applied, a technique that has already been demonstrated for simulating arrays of junctions~\cite{junction_chain1997,junction_ladder2003,current_mirror_DMRG2019,fluxonium_dmrg2019}. However, DMRG can efficiently compute only low-lying eigenstates, which may be insufficient for direct gate simulation. It would be interesting to explore an embedding strategy to simulate time evolution via an effective Hamiltonian that accurately describes the part of the circuit corresponding to the qubits for the gate operation as well as neighboring circuit components, while treating the rest of the circuit in a mean-field manner.

Finally, it would be interesting to study how some of our techniques can be applied to more general circuits with linear nonreciprocal elements such as gyrators and circulators~\cite{parra2019canonical}, as well as circuits coupled to infinite dimensional systems such as transmission lines~\cite{parra2018quantum}. The free particle problem has been discussed at length in both of these scenarios.

\textit{Note added}: Recently a related work~\cite{kerman2020efficient} has come to our attention. In this paper, the author also considers general superconducting circuits and develops two techniques to simulate them more efficiently. However, the question they primarily answer is: given a circuit Hamiltonian, what is an efficient way to numerically analyze it? This could be diagonalization or exponentiation for instance. In contrast, our focus is on how to transform the modes to manipulate the Hamiltonian into a form more amenable to numerical analysis. The first technique they give is a classification of modes based on their local Hamiltonian, choosing a convenient basis for each so as to make matrix representation of the relevant local operators sparse. The second is to partition the set of modes so that they can treat the coupling between different subsets as a perturbation and diagonalize each subset of modes separately. They consider separating the fast and slow modes, such as in~\cite{rodriguez_2014}. Since we focus on a different step in the simulation of a superconducting circuit, our techniques could be combined with those of~\cite{kerman2020efficient}. In our paper we additionally prove that free modes, which are referred to as island modes in~\cite{kerman2020efficient}, are fully independent and can be removed from the Hamiltonian after an appropriate transformation. 

% If you have acknowledgments, this puts in the proper section head.
\begin{acknowledgments}
	We thank Daniel Weiss for noticing that the eigenenergies we compute should monotonically decrease with respect to the local basis cutoff dimensions, which helped us identify a minor bug in our code~\footnote{For terms like $\cos(\phi_i)$ in the local Hamiltonians, we should compute $\cos(\phi_i)$ with the unprojected $\phi_i$ operator first, and then project onto the local low energy subspace. }. We thank Guido Burkard, Jianxin Chen, Jay Gambetta, Martin Idel, Jens Koch, Adrian Parra-Rodriguez, Mario Szegedy , Feng Wu, Xinyuan You, and Gengyan Zhang for inspiring discussions. DD would like to thank God for all of His provisions.
\end{acknowledgments}

%\newpage
\appendix
\section{Details of Free Mode Removal Algorithm}
\label{sec:gauss_elim}
We here go into the details of the algorithm for free mode removal. We first perform the orthogonal transformation on the modes which diagonalizes the subspace in~\cref{eq:num_free}. Next, we define the linear canonical transformation matrix $W$ that implements the Gaussian elimination. For $f\in \{1,2,\cdots,F\}$, we iteratively define the matrices $W_f$ and $\mathcal{C}_f$. $W_f$ is the $n \times n$ identity matrix except for its $f$-th column, which has the following entries:
\begin{align}
  \label{eq:W_def}
  (W_f)_{if} \equiv - \frac{(\mathcal{C}_{f-1})_{if}}{(\mathcal{C}_{f-1})_{ff}}.
\end{align}
Next, $\mathcal{C}_0$ is defined to be $\mathcal{C}$, while
\begin{align*}
  \mathcal{C}_f \equiv W_f \mathcal{C}_{f-1} W^T_f.
\end{align*}
Our overall transformation matrix is then given by
\begin{align}
  \label{eq:free_W}
  W \equiv [(W_F W_{F-1} \cdots W_1)^T]^{-1}.
\end{align}

$W_f$ is well-defined since $\mathcal{C}_{f-1}$ is positive definite and so $(\mathcal{C}_{f-1})_{ff}$ is never zero. This can be established by induction. $\mathcal{C}_0 \equiv \mathcal{C}$ is positive definite as defined in~\cite{burkard2005circuit}. Now assume the same for $\mathcal{C}_{f-1}$. This implies $W_f$ is well-defined. Since $(W_f)_{ff} = -1 \neq 0$, the $f$-th column of $W_f$ is linearly independent of the other columns, which are those of the identity matrix. Thus, $W_f$ has full rank. This implies $\mathcal{C}_f \equiv W_f \mathcal{C}_{f-1} W^T_f$ is positive definite. The claim follows.

The final matrix
\begin{align}
  \label{eq:gauss_elim}
  \mathcal{C}' \equiv \mathcal{C}_F = (W^T)^{-1} \mathcal{C} W^{-1}
\end{align}
has vanishing off-diagonal components for its first $F$ rows and columns. This can be verified via the following. The off-diagonal entries of $f$-th column of $\mathcal{C}_f$ are given by:
\begin{align*}
  (\mathcal{C}_f)_{if} & = \sum_{j,k=1}^{n} (W_{f})_{ij} (\mathcal{C}_{f-1})_{jk} (W_{f})_{fk}\nonumber\\
  & = - \sum_{j=1}^{n} (W_{f})_{ij} (\mathcal{C}_{f-1})_{jf} \nonumber\\
  & = - \left[- \frac{(\mathcal{C}_{f-1})_{if}}{(\mathcal{C}_{f-1})_{ff}} (\mathcal{C}_{f-1})_{ff} + 1 \times (\mathcal{C}_{f-1})_{if} \right]\nonumber\\
  & = 0.
\end{align*}
By the symmetry of $\mathcal{C}_f$, the off-diagonal entries in the $f$-th row is also vanishing. Furthermore, it can be shown that the off-diagonal terms of $\mathcal{C}_f$ for rows and columns before $f$ are also vanishing. This can be established via induction. This is vacuously true for $\mathcal{C}_1$. Now suppose it is true for $\mathcal{C}_f$. Then, $(\mathcal{C}_f)_{if+1} = 0$ for $i <f+1$. This implies the same is true for $W_{f+1}$. By symmetry, the same holds for the $f+1$-th row of $\mathcal{C}_f$. The same holds true for $W_{f+1}$ by definition. Hence, both $\mathcal{C}_f$ and $W_{f+1}$, and therefore $W_{f+1}^T$, are block diagonal with block dimensions $1, \cdots, 1, n-f$. This then implies the same is true for $\mathcal{C}_{f+1}$. The claim follows.

As established above, every matrix $W_f$ is full rank and therefore invertible. This implies $W$ is invertible. Hence, the matrix
\begin{align*}
  \mathcal{C}'^{-1} = W \mathcal{C}^{-1} W^T
\end{align*}
is well-defined. Since $\mathcal{C}'$ is block diagonal with block dimensions $1, \cdots, 1, n-F$, the same is true for $\mathcal{C}'^{-1}$. As per~\cref{eq:coupling_transform}, this transformation on $\mathcal{C}^{-1}$ is implemented precisely by performing the linear canonical transformation~\cref{eq:lin_can} with this matrix $W$. 

We now show the other consequences of implementing the linear canonical transformation with $W$, including how it keeps free modes explicit. We observe that for $i > F$,
\begin{align*}
  \Phi_i = \sum_{j}^{} W^{-1}_{ij} \Phi_j' = \Phi_i',
\end{align*}
where $\vec \Phi'\equiv W \vec \Phi$ is the transformed flux operators. Hence, the transformed fluxes include the original nonfree modes' fluxes. This means by removing the free modes in the Hamiltonian in terms of the transformed modes, the Hamiltonian on the original nonfree modes can be explicitly obtained. Furthermore, the junction term in the Hamiltonian is preserved since junction modes can never be free.

As per~\cref{eq:coupling_transform}, the transformation on the fluxes implies the flux coupling matrix is transformed as follows:
\begin{align*}
  M_0 \mapsto (W^T)^{-1} M_0 W^{-1}.
\end{align*}
That is, the same Gaussian elimination performed on $\mathcal{C}$ in~\cref{eq:gauss_elim} is performed on $M_0$. Since the free modes have no potential, this procedure preserves $M_0$. We can see this explicitly:
\begin{align*}
  (W_f M_0 W_f^T)_{ij} & = \sum_{k,l=1}^n (W_f)_{ik} (M_0)_{kl} (W_f)_{jl}\nonumber\\
  & = \sum_{k=i,f, l= j,f} (W_f)_{ik} (M_0)_{kl} (W_f)_{jl}\nonumber\\
  & = (W_f)_{ii} (M_0)_{ij} (W_f)_{jj}\nonumber\\
  & = (M_0)_{ij}
\end{align*}
since the $f$-th row and column of $M_0$ is zero. Hence,
\begin{align*}
  (W^T)^{-1} M_0 W^{-1} = M_0,
\end{align*}
as claimed above. Similarly, the external flux coupling matrix is transformed as
\begin{align*}
  N \mapsto (W^T)^{-1} N,
\end{align*}
but
\begin{align*}
  (W_f N)_{ij} & = \sum_{k=1}^{n} (W_f)_{ik} N_{kj}\nonumber\\
  & = (W_f)_{ii} N_{ij}\nonumber\\
  & = N_{ij},
\end{align*}
so $(W^T)^{-1} N =N$. This means the first $F$ transformed modes are also free modes, and the rest of the modes have the same flux and external flux couplings.

Furthermore, we claim that the charge coupling between nonfree modes is preserved. We first observe $W_f$ is an involution, that is, $W_f^{-1} = W_f$. To see this, compute
\begin{align*}
  (W_f W_f)_{ii} & = \sum_{k=1}^{n} (W_f)_{ik} (W_f)_{ki} \nonumber\\
  & = (W_f)_{ii}^2\nonumber\\
  & = 1.
\end{align*}
On the other hand, for $i \neq j$,
\begin{align*}
  (W_f W_f)_{ij}& = \sum_{k=1}^{n} (W_f)_{ik} (W_f)_{kj} \nonumber\\
  & = (W_f)_{ii} (W_f)_{ij} + (W_f)_{if} (W_f)_{fj}\nonumber\\
  & = \delta_{jf}( (W_f)_{ii} (W_f)_{if} + (W_f)_{if} (W_f)_{ff})\nonumber\\
  & = \delta_{jf}( (W_f)_{if} - (W_f)_{if} )\nonumber\\
  & = 0,
\end{align*}
where the second to last line follows since $i\neq j$ and so $i \neq f$. Then, for $i,j > F$,
\begin{align*}
  (\mathcal{C}_f^{-1})_{ij} & = [(W_f^T)^{-1} \mathcal{C}_{f-1}^{-1} W_f^{-1}]_{ij}\nonumber\\
  & = (W_f^T \mathcal{C}_{f-1}^{-1} W_f)_{ij}\nonumber\\
  & = \sum_{k,l=1}^{n} (W_f)_{ki} (\mathcal{C}^{-1}_{f-1})_{kl} (W_f)_{lj}\nonumber\\
  & = (W_f)_{ii} (\mathcal{C}^{-1}_{f-1})_{ij} (W_f)_{jj}\nonumber\\
  & = (\mathcal{C}^{-1}_{f-1})_{ij}.
\end{align*}
Hence, the submatrices of $\mathcal{C}^{-1}, \mathcal{C}'^{-1}$ corresponding to indices greater than $F$ are the same. This implies the claim.

In summary, the Hamiltonian in terms of the transformed modes takes the form
\begin{align*}
H & = \frac{1}{2} (\vec Q' - C_V' \vec V)^T  \mathcal{C}'^{-1} (\vec Q' - C_V' \vec V) \nonumber\\
  & - \sum_{i=1}^{n_J} E_{J,i} \cos \left(\frac{2 \pi}{\Phi_0} \Phi_i' \right) \nonumber\\
  & + \frac{1}{2} \vec \Phi'^T M_0 \vec \Phi' + \vec \Phi'^T N \vec \Phi_x,
\end{align*}
where
\begin{align*}
  \vec Q' \equiv (W^T)^{-1} \vec Q, \, C_V' \equiv (W^T)^{-1} C_V, \, \vec \Phi' \equiv W \vec \Phi.
\end{align*}
$H$ describes a system of $n$ modes with $F$ free modes which are each independent of all other modes, including other free modes. The Hamiltonian of the original nonfree modes can therefore be extracted by eliminating all the terms in $H$ corresponding to free mode charges or fluxes. Using the facts above, the extracted Hamiltonian is given by
\begin{align*}
  H_{\backslash F} = & \frac{1}{2} \vec Q_{\backslash F}'^T \mathcal{C}_{\backslash F}^{-1} \vec Q_{\backslash F}' - \sum_{i=1}^{n_J} E_{J,i} \cos \left(\frac{2 \pi}{\Phi_0} \Phi_i \right) \nonumber\\
  & + \frac{1}{2} \vec \Phi_{\backslash F}^T (M_0)_{\backslash F} \vec \Phi_{\backslash F} + \vec \Phi_{\backslash F}^T N_{\backslash F} \vec \Phi_x \nonumber \\
  & - [(C_V')_{\backslash F}\vec V]^T \mathcal{C}^{-1}_{\backslash F} \vec Q_{\backslash F}' ,
\end{align*}
where $_{\backslash F}$ means that we remove the components corresponding to free modes. Note that for $N, C_V'$ this means removing the rows corresponding to free modes. Also, we omit the term proportional to the identity coming from the quadratic term in the voltages $\vec V$ since this does not affect the physics. Lastly, the drive term $\propto V$ is as written because of the following argument. The term before removing free modes is given by
\begin{align*}
  -(C_V'\vec V)^T \mathcal{C}'^{-1} \vec Q'.
\end{align*}
Then, removing the free modes is equivalent to removing the first $F$ columns of $\mathcal{C}'^{-1}$ and the first $F$ entries of $\vec Q'$. However, since $\mathcal{C}'^{-1}$ is block diagonal, after removing the first $F$ columns, the first $F$ rows are all zero. Hence, the first $F$ rows can also be removed, as well as the first $F$ rows of $C_V'$. However, as shown above, the remaining submatrix of $\mathcal{C}'^{-1}$ is the same as that of $\mathcal{C}^{-1}$. Thus the expression shown above follows. 

Upon closer inspection, the only difference~\footnote{There is also the subtlety that $\vec Q_{\setminus F}$ is now replaced by $\vec Q'_{\setminus F}$, which are both conjugate to $\vec \Phi_{\setminus F}$ and hence the Hamiltonians are unitarily equivalent by~\cref{lem:lin_can}. } between the final expression for $H_{\backslash F}$ and simply removing the terms in $H$ corresponding to free modes is indeed only the $C_V'$ matrix in the voltage source term, which involves the $W$ matrix. This establishes our claim above. This result is also intuitive from the standpoint of Lagrangian mechanics. Since the flux operator for a free mode does not appear in the Hamiltonian, it is a cyclic coordinate. Thus, the corresponding momentum, which is a linear combination of charges, is conserved, and we can simply replace it by a constant.

Note that we can also bypass the iterative Gaussian elimination approach and instead directly tailor $W$ to block diagonalize $\mathcal{C}$. Suppose $\mathcal{C}$ has the following block diagonal structure between the free and non-free subspaces:
\begin{align*}
  \mathcal{C} = 
  \begin{pmatrix}
    A & X \\
    X^T & B
  \end{pmatrix}.
\end{align*}
Then, we can define
\begin{align*}
  (W^T)^{-1} = 
  \begin{pmatrix}
    I_F & 0_{F \times (n-F)}\\
    - X^T A^{-1} & I_{n-F} 
  \end{pmatrix}
\end{align*}
so that
\begin{align*}
  (W^T)^{-1} \mathcal{C} W^{-1} = 
  \begin{pmatrix}
    A & 0_{F \times (n-F)}\\
    0_{(n-F) \times F} & -X^T A^{-1} X + B
  \end{pmatrix}.
\end{align*}
That this choice of $W$ has the other desired properties follows from similar arguments given above and the formula for the inverse of a block matrix.

\section{Proof of Williamson's Theorem and Its Extension \label{app:williamson_proof}}
We here prove both Williamson's theorem and its extension, that is,~\cref{thm:williamson} and~\cref{cor:block_williamson}. We prove Williamson's theorem since the proof of its extension is based on the proof of the theorem we give here. For convenience we restate each:
\begin{thm*}
  (Williamson~\cite{williamson1936algebraic}). Let $M \in \R^{2n \times 2n}$ be a positive definite matrix. Then there exists a symplectic matrix $S \in \mathrm{Sp}(2n)$ and a diagonal matrix $\Lambda \in \R^{n\times n}$ with positive diagonal entries such that
  \begin{align*}
    S^T M S = \mathrm{diag}(\Lambda, \Lambda).
  \end{align*}
\end{thm*}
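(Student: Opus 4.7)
The plan is to exploit the interaction between $M$, the symplectic form $\Omega$, and the spectral theory of real antisymmetric matrices. Since $M$ is positive definite, it admits a unique positive definite square root $A \equiv M^{1/2}$. The first observation is that the matrix $A \Omega A$ is real antisymmetric (because $A$ is symmetric and $\Omega$ is antisymmetric) and invertible (because both $A$ and $\Omega$ are).

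Next I would invoke the real normal form theorem for antisymmetric matrices: any nonsingular real antisymmetric matrix on $\R^{2n}$ is orthogonally similar to a block-diagonal matrix whose blocks are the $2 \times 2$ matrices $\begin{pmatrix} 0 & \lambda_j \\ -\lambda_j & 0 \end{pmatrix}$ with $\lambda_j > 0$. (Equivalently, $i A \Omega A$ is Hermitian with real eigenvalues, which necessarily come in $\pm \lambda_j$ pairs; signs can be forced positive by flipping one basis vector within each pair.) Composing this orthogonal similarity with the orthogonal permutation sending $(e_1, e_2, \ldots, e_{2n-1}, e_{2n})$ to $(e_1, e_3, \ldots, e_{2n-1}, e_2, e_4, \ldots, e_{2n})$ yields an orthogonal $O$ with
\begin{align*}
  O^T (A \Omega A) O = \begin{pmatrix} 0 & D \\ -D & 0 \end{pmatrix},
\end{align*}
where $D \equiv \mathrm{diag}(\lambda_1, \ldots, \lambda_n)$ has strictly positive diagonal entries.

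The key algebraic step is to factor the right-hand side as $D_b^{1/2}\, \Omega\, D_b^{1/2}$, with $D_b \equiv \mathrm{diag}(D, D)$. Defining
\begin{align*}
  S \equiv A^{-1} O D_b^{1/2},
\end{align*}
I would verify symplecticity by inverting the relation above and using $\Omega^{-1} = -\Omega$ to obtain $O^T (A^{-1} \Omega A^{-1}) O = D_b^{-1/2} \Omega D_b^{-1/2}$, from which a direct calculation gives $S^T \Omega S = \Omega$. A second direct computation exploits $A^{-1} M A^{-1} = I$ to yield $S^T M S = D_b^{1/2} O^T O D_b^{1/2} = D_b = \mathrm{diag}(\Lambda, \Lambda)$ with $\Lambda \equiv D$, completing the proof.

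The main obstacle I anticipate is the normal form step: one must take care that the orthogonal reduction produces precisely the block form $\begin{pmatrix} 0 & D \\ -D & 0 \end{pmatrix}$ rather than a mere direct sum of $2 \times 2$ blocks, which requires bookkeeping of the reordering permutation and of signs to guarantee $\lambda_j > 0$. Once that normal form is in hand, everything else reduces to routine matrix manipulations and the verification that the prescribed $S$ simultaneously symplectically preserves $\Omega$ and diagonalizes $M$ into the required doubled-diagonal shape.
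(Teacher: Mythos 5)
Your proposal is correct and is essentially the paper's own argument in mirror image: the paper conjugates $\Omega$ by $M^{-1/2}$, diagonalizes the complex Hermitian matrix $iM^{-1/2}\Omega M^{-1/2}$ explicitly to build the orthogonal $O$, and sets $S = M^{-1/2}O\tilde D^{-1/2}$, whereas you conjugate by $M^{1/2}$, cite the real antisymmetric normal form (which is the same spectral fact), and set $S = M^{-1/2}OD_b^{1/2}$. The two constructions coincide up to whether the symplectic eigenvalues appear as $D$ or $D^{-1}$, and your one-line inversion argument for $S^T\Omega S=\Omega$ is a slightly cleaner packaging of the paper's explicit block computation.
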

\begin{proof}
  (Adapted from~\cite{idel2017perturbation}) The proof involves an elegant interplay between real and complex numbers. Since $M$ is positive definite, we can define its positive square root $M^{1/2}$ and its inverse $M^{-1/2}$. Consider the complex matrix $i M^{-1/2} \Omega M^{-1/2}$. Since $M^{-1/2}$ is real symmetric, $i M^{-1/2} \Omega M^{-1/2}$ is complex Hermitian, where again $\Omega$ is the symplectic form:
  \begin{align*}
    \Omega \equiv
    \begin{pmatrix}
      0_n & I_n \\
      -I_n & 0_n
    \end{pmatrix}.
  \end{align*}
  Thus, we can find a basis of eigenvectors of $i M^{-1/2} \Omega M^{-1/2}$. However, we notice that if $v \in \mathbb{C}^{2n}$ is an eigenvector with eigenvalue $\lambda \in \mathbb{R}$, then
  \begin{align*}
    i M^{-1/2} \Omega M^{-1/2} v^* & = (- i M^{-1/2} \Omega M^{-1/2} v)^* \nonumber\\
    & = (-\lambda v)^* \nonumber\\
    & = - \lambda v^*,
  \end{align*}
where the asterisk refers to elementwise complex conjugation. Thus, $-\lambda$ is also an eigenvalue of $i M^{-1/2} \Omega M^{-1/2}$ with eigenvector $v^*$, which has to be orthogonal to $v$ if $\lambda \neq 0$. However, $\lambda$ cannot be zero since that would imply $\Omega$ has a nontrivial null space, which contradicts the fact $\det(\Omega) = 1$. Hence, we have a basis of normalized eigenvectors of $i M^{-1/2} \Omega M^{-1/2}$: $B =\{ v_1, \cdots, v_n,  v_1^*, \cdots, v_n^*\}$ . Now consider the real matrix 
$$O \equiv (x_1, \cdots, x_n, y_1, \cdots, y_n) \in \mathbb{R}^{2n \times 2n},$$
where
$$x_j \equiv \frac{v_j + v_j^*}{\sqrt{2}},\, y_j \equiv i \frac{v_j - v_j^*}{\sqrt{2}}.$$
By the orthonormality of $B$ we conclude that $O$ is a real orthogonal matrix. Now, define the real matrix $S \equiv M^{-1/2} O \tilde D^{-1/2}$ where $\tilde D \equiv \mathrm{diag}(D, D)$ and $D \equiv \mathrm{diag}(\lambda_1, \cdots, \lambda_n)$ is the diagonal of the positive eigenvalues of $i M^{-1/2} \Omega M^{-1/2}$, in the order corresponding to $B$. Without loss of generality, all the positive eigenvalues correspond to $v_j$ and the negative to $v_j^*$. Then, we compute:
\begin{align*}
  S^T \Omega S & = \tilde D^{-1/2} O^T M^{-1/2} \Omega M^{-1/2} O \tilde D^{-1/2} \nonumber\\
  & = -i\tilde D^{-1/2} O^T U \mathrm{diag}(D, -D) U^\dagger O \tilde D^{-1/2},
\end{align*}
where $U = (v_1, \cdots, v_n, v_1^*, \cdots, v_n^*)$ is the complex unitary that diagonalizes $i M^{-1/2} \Omega M^{-1/2}$. Next, we compute
\begin{align*}
  O^T U & = 
  \begin{pmatrix}
    x_1^T\\
    \vdots \\
    x_n^T \\
    y_1^T\\
    \vdots \\
    y_n^T
  \end{pmatrix}
  (v_1, \cdots, v_n, v_1^*, \cdots, v_n^*) \\
  & = \frac{1}{\sqrt{2}} 
  \begin{pmatrix}
    1_n & 1_n\\
    -i1_n & i1_n
  \end{pmatrix},
\end{align*}
and so
\begin{align*}
& S^T \Omega S \nonumber\\
& = \frac{-i}{2} \tilde D^{-1/2} 
\begin{pmatrix}
1_n & 1_n\\
-i1_n & i1_n
\end{pmatrix} 
\begin{pmatrix}
  D & 0_n \\
  0_n & -D
\end{pmatrix}
\begin{pmatrix}
1_n & i1_n\\
1_n & -i1_n
\end{pmatrix} \tilde D^{-1/2}\nonumber\\
& = \frac{-i}{2}  \tilde D^{-1/2} 
\begin{pmatrix}
1_n & 1_n\\
-i1_n & i1_n
\end{pmatrix}
\begin{pmatrix}
D & iD\\
-D & iD
\end{pmatrix} \tilde D^{-1/2}\nonumber \\
& = \frac{-i}{2} \begin{pmatrix}
D^{-1/2} & 0_n\\
0_n & D^{-1/2}
\end{pmatrix}
\begin{pmatrix}
0_n & 2iD \\
-2iD & 0_n
\end{pmatrix}
\begin{pmatrix}
D^{-1/2} & 0_n\\
0_n & D^{-1/2}
\end{pmatrix}\nonumber\\
& = \frac{-i}{2} 
\begin{pmatrix}
0_n & 2i D^{1/2}\\
-2i D^{1/2}&0_n
\end{pmatrix}
\begin{pmatrix}
D^{-1/2} & 0_n\\
0_n & D^{-1/2}
\end{pmatrix} \nonumber\\
& = \Omega.
\end{align*}
Thus, $S$ is symplectic. It is clear that $S^T M S = \tilde D^{-1}$, as desired.
\end{proof}

We now state and prove the extension.
\begin{cor*}
  If $M \in \R^{2n \times 2n}$ is positive definite and is block diagonal with two $n\times n$ blocks, there exists a block diagonal symplectic matrix 
  \begin{align*}
    S  = 
    \begin{pmatrix}
      S_n & 0_n\\
      0_n & (S_n^T)^{-1}
    \end{pmatrix}\in \mathrm{Sp}(2n)
  \end{align*}
and a diagonal matrix $\Lambda \in \R^{n\times n}$ with positive diagonal entries such that
  \begin{align*}
    S^T M S = \mathrm{diag}(\Lambda, \Lambda).
  \end{align*}
\end{cor*}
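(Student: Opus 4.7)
The plan is to adapt the proof of Williamson's theorem given above to the block diagonal setting and track carefully how the block structure of $M$ propagates through the construction. Write $M = \mathrm{diag}(A, B)$ with $A, B \in \R^{n \times n}$ positive definite. Then $M^{-1/2} = \mathrm{diag}(A^{-1/2}, B^{-1/2})$, and a direct computation shows that $iM^{-1/2}\Omega M^{-1/2}$ has the anti-block-diagonal form
\begin{align*}
iM^{-1/2}\Omega M^{-1/2} =
\begin{pmatrix}
0_n & iA^{-1/2}B^{-1/2} \\
-iB^{-1/2}A^{-1/2} & 0_n
\end{pmatrix}.
\end{align*}

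First, I would analyze the eigenvectors of this matrix. Writing an eigenvector as $(u,v)^T$ with eigenvalue $\lambda$, the two coupled equations collapse to the statement that $u$ is an eigenvector of the positive definite matrix $A^{1/2} B A^{1/2}$ with eigenvalue $1/\lambda^2$, while $v = -i\lambda B^{1/2} A^{1/2} u$. Since $A^{1/2} B A^{1/2}$ is real symmetric, its eigenvectors $u_j$ can be chosen real and orthonormal; this automatically yields positive eigenvalues $\lambda_j^{-2}$, and (after normalization) an orthonormal basis of eigenvectors of $iM^{-1/2}\Omega M^{-1/2}$ of the form $w_j \propto (u_j, -i\lambda_j B^{1/2} A^{1/2} u_j)^T$ for eigenvalue $\lambda_j$ and $w_j^*$ for $-\lambda_j$.

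Next, I would feed these eigenvectors into the Williamson construction of the real orthogonal $O = [x_1,\ldots,x_n,y_1,\ldots,y_n]$. Because the top component of $w_j$ is real and its bottom component purely imaginary, the combinations $x_j = (w_j+w_j^*)/\sqrt{2}$ and $y_j = i(w_j - w_j^*)/\sqrt{2}$ are supported respectively in the top and bottom block. Hence $O = \mathrm{diag}(U, V)$ with $U = [u_1,\ldots,u_n]$ orthogonal and $V = B^{1/2} A^{1/2} U D$, where $D = \mathrm{diag}(\lambda_1,\ldots,\lambda_n)$. Applying the Williamson prescription $S = M^{-1/2} O \tilde D^{-1/2}$ immediately yields a block diagonal $S = \mathrm{diag}(S_n, T_n)$ with $S_n = A^{-1/2} U D^{-1/2}$ and $T_n = B^{-1/2} V D^{-1/2}$, and the conclusion $S^T M S = \mathrm{diag}(\Lambda,\Lambda)$ with $\Lambda = D^{-1}$ is then inherited from the main theorem.

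The one nontrivial point is to verify that $T_n = (S_n^T)^{-1}$, which is what distinguishes the corollary from a generic block diagonal $S$. A direct substitution using the explicit form $V = B^{1/2} A^{1/2} U D$ gives $T_n = A^{1/2} U D^{1/2}$, while $(S_n^T)^{-1} = A^{1/2} (U^T)^{-1} D^{1/2}$; these coincide precisely because $U$ is real orthogonal, so $(U^T)^{-1} = U$. This orthogonality is the sole place where the block diagonal hypothesis is used in an essential way, and it is what makes the corollary strictly stronger than applying Williamson's theorem as a black box. The rest is bookkeeping, and I would include the code reference to the numerical implementation for readers interested in concrete symplectic diagonalization.
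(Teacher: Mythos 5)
Your proof is correct and follows essentially the same route as the paper: both adapt the Williamson construction by reducing the eigenproblem of $iM^{-1/2}\Omega M^{-1/2}$ to that of a real symmetric $n\times n$ matrix built from the blocks $A$ and $B$, and then track how the block structure propagates into $O$ and $S$. The only cosmetic difference is that your phase convention (real top component, purely imaginary bottom component) makes $O$, and hence $S$, block diagonal directly and lets you verify $T_n=(S_n^T)^{-1}$ explicitly from the orthogonality of $U$, whereas the paper's convention produces an anti-block-diagonal $S$ that is then rotated into block form by a final right-multiplication by $\Omega$, with the relation between the blocks deduced from the symplectic condition.
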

\begin{proof}
  We will roughly follow the previous proof. Denote our initial matrix by 
$$M = 
\begin{pmatrix}
A & 0_n\\
0_n & B
\end{pmatrix},$$
where $A, B \in \mathbb{R}^{n \times n}$. Now, since $M$ is positive definite, so are $A,B$. We can then write 
$$i M^{-1/2} \Omega M^{-1/2} = i
\begin{pmatrix}
0_n & A^{-1/2} B^{-1/2}\\
- B^{-1/2}A^{-1/2} & 0_n
\end{pmatrix}.$$
Consider a normalized real eigenvector of the real positive definite matrix $B^{-1/2} A^{-1} B^{-1/2}$, $w \in \mathbb{R}^{n}$ with eigenvalue $\lambda > 0$. Define the complex vector $w' \equiv \frac{i}{\sqrt{\lambda}} A^{-1/2} B^{-1/2} w \in \C^n$, which is also normalized with respect to the norm for $\C^n$. Then, we have 
\begin{align*}
& i M^{-1/2} \Omega M^{-1/2}
\begin{pmatrix}
w' \\
w
\end{pmatrix}\nonumber\\
&  = i
\begin{pmatrix}
0_n & A^{-1/2} B^{-1/2}\\
- B^{-1/2}A^{-1/2} & 0_n
\end{pmatrix}
\begin{pmatrix}
w' \\
w
\end{pmatrix}\nonumber\\
& = i
\begin{pmatrix}
A^{-1/2} B^{-1/2} w \\
-B^{-1/2} A^{-1/2} w'
\end{pmatrix}\nonumber\\
& =
i
\begin{pmatrix}
A^{-1/2} B^{-1/2} w \\
-\frac{i}{\sqrt{\lambda}} B^{-1/2} A^{-1} B^{-1/2} w
\end{pmatrix}\nonumber\\
& =
\begin{pmatrix}
i A^{-1/2} B^{-1/2} w \\
\frac{1}{\sqrt{\lambda}} \lambda w
\end{pmatrix}
= \sqrt{\lambda}
\begin{pmatrix}
w' \\
 w
\end{pmatrix}.
\end{align*}
Thus, $\tilde w \equiv 
\begin{pmatrix}
w'\\
w
\end{pmatrix}$ is a eigenvector of $i M^{-1/2} \Omega M^{-1/2}$ with eigenvalue $\sqrt{\lambda}$. By the proof of Williamson's theorem, 
$$\tilde w^* \equiv 
\begin{pmatrix}
-w'\\
w
\end{pmatrix}$$
is an eigenvector with eigenvalue $-\sqrt{\lambda}$. Now, given any two eigenvectors $w_i, w_j$ of $B^{-1/2} A^{-1} B^{-1/2}$, 
$$\tilde w_i^\dagger \tilde w_j = \frac{1}{\sqrt{\lambda_i \lambda_j}} w_i^T B^{-1/2}A^{-1} B^{-1/2} w_j + w_i^T w_j = 0$$
and similarly 
$$\tilde w_i^\dagger \tilde w_j^* = 0.$$
Thus, from a basis of real eigenvectors $\{ w_1 , \cdots , w_n \}$ of the real symmetric matrix $B^{-1/2} A^{-1} B^{-1/2}$, we obtain a basis of complex eigenvectors $\{\tilde w_1, \cdots \tilde w_n, \tilde w_1^* , \cdots, \tilde w_n^* \}$ of complex Hermitian matrix $i M^{-1/2} \Omega M^{-1/2}$, up to normalization. Following the above proof, we conclude that the real vectors $x_j, y_j$ are of the form
$$x_j =
\begin{pmatrix}
\vec 0\\
w
\end{pmatrix},$$  $$y_j =
\begin{pmatrix}
i w' \\
\vec 0
\end{pmatrix},$$
where $\vec 0 \in \mathbb{R}^n$ is the zero vector. Hence, the matrix $O$ is of the form
$$O =
\begin{pmatrix}
0_n & O_1\\
O_2 & 0_n
\end{pmatrix},$$
\newpage
\noindent and so is the $S$ obtained from Williamson's theorem:
\begin{align*}
  S & \equiv M^{-1/2} O \tilde D^{-1/2} \nonumber\\
  & = 
\begin{pmatrix}
0_n & A^{-1/2} O_1 D^{-1/2}\\
B^{-1/2} O_2 D^{-1/2} & 0_n
\end{pmatrix}.
\end{align*}
However, defining 
\begin{align*}
S' \equiv S \Omega = 
\begin{pmatrix}
- A^{-1/2} O_1 D^{-1/2} & 0_n\\
0_n & B^{-1/2} O_2 D^{-1/2}
\end{pmatrix}, 
\end{align*}
we find
$$S'^T M S' = \Omega^T \tilde D^{-1} \Omega = \tilde D^{-1}.$$
Since symplectic transformations form a group, $S'$ is the desired block diagonal symplectic matrix. The relationship between the block matrices of $S'$ follows directly from the symplectic condition $S'^T \Omega S' = \Omega$.
\end{proof}

\bibliography{ref}

\end{document}